\def\BibTeX{{\rm B\kern-.05em{\sc i\kern-.025em b}\kern-.08em
		T\kern-.1667em\lower.7ex\hbox{E}\kern-.125emX}}
\newtheorem{definition}{Definition}
\newtheorem{lemma}{Lemma}
\newtheorem{conjecture}{Conjecture}
\newtheorem{theorem}{Theorem}
\newtheorem{corollary}{Corollary}
\newcommand{\fakecite}[1]{\textcolor{red}{[X]}}
\newcommand{\guillemets}[1]{``#1''}
\newcommand{\diag}{\mbox{diag}}
\newcommand{\set}[1]{\left\{#1\right\}}
\newcommand{\myvector}[1]{{#1}}
\newcommand{\mymatrix}[1]{\bm{#1}}
\newcommand{\R}{\mathbb{R}}
\newcommand{\K}{\mathbb{K}}
\renewcommand{\L}{\mathbb{L}}
\newcommand{\Q}{\mathbb{Q}}
\newcommand{\rF}{\mathcal{F}}
\newcommand{\rG}{\mathcal{G}}
\newcommand{\rP}{\boldsymbol{\mathcal{P}}}
\newcommand{\norm}[1]{\left\lVert#1\right\rVert}
\newcommand{\abs}[1]{\left|#1\right|}
\newcommand{\ie}{{i.e.},}
\newcommand{\eg}{{e.g.},}
\DeclareMathOperator*{\rank}{rank}
\begin{document}
	\title{Pseudorange Rigidity and Solvability of Cooperative GNSS Positioning}
	
	\auteur{\coord{Colin}{Cros}{1,2},
		\coord{Christophe}{Prieur}{1},
		\coord{Pierre-Olivier}{Amblard}{1},
		\coord{Jean-François}{Da Rocha}{2}}
	
	\adresse{\affil{1}{Univ. Grenoble Alpes, CNRS, Grenoble-INP, GIPSA-Lab, F-38000 Grenoble}
		\affil{2}{Telespazio FRANCE, F-31100 Toulouse}}

	\email{colin.cros@gipsa-lab.fr,
		christophe.prieur@gipsa-lab.fr\\
		pierre-olivier.amblard@cnrs.fr, jeanfrancois.darocha@telespazio.com}

	\resumeanglais{Global Navigation Satellite Systems (GNSS) are a widely used technology for positioning and navigation. GNSS positioning relies on pseudorange measurements from satellites to receivers. A pseudorange is the apparent distance between two agents deduced from the time-of-flight of a signal sent from one agent to the other. Because of the lack of synchronization between the agents' clocks, it is a biased version of their distance. This paper introduces a new rigidity theory adapted to pseudorange measurements. The peculiarity of pseudoranges is that they are asymmetrical measurements. Therefore, unlike other usual rigidities, the graphs of pseudorange frameworks are directed. In this paper, pseudorange rigidity is proved to be a generic property of the underlying undirected graph of constraints. The main result is a characterization of rigid pseudorange graphs as combinations of rigid distance graphs and connected graphs. This new theory is adapted for GNSS. It provides new insights into the minimum number of satellites needed to locate a receiver, and is applied to the localization of GNSS cooperative networks of receivers. The interests of asymmetrical constraints in the context of formation control are also discussed.\\
	\textbf{Keywords:} Rigidity theory, Pseudorange, Cooperative GNSS}
	
	\maketitle

	\section{Introduction}\label{sec: Introduction}
	
	Global Navigation Satellite Systems (GNSS) provide an effective and low-cost solution for localization. They rely on constellations of satellites, equipped with highly accurate atomic clocks. In these systems, the satellites broadcast signals that contain information about the precise location of the emitting satellite as well as the time of emission of the signal transmitted \cite{kaplan2017understanding}. These signals are received by GNSS receivers on the ground. The receivers measure the time of reception and deduce the times-of-flight of the signals. These times are converted into distances by multiplying by the signal's celerity. As the receivers are generally not synchronized with the satellites, the distance obtained is a biased version of the distance between the satellite and the receiver. It is called a \emph{pseudorange}. The bias comes from the delay between the satellite's clock and the receiver's clock that is also multiplied by the signal's celerity. This bias is the same for every satellite within a GNSS constellation and must be estimated. Indeed, a delay of $10$ns would produce a range error of about $3$m. By receiving signals from multiple satellites, a receiver can determine its position on the Earth's surface by solving the nonlinear system of equations induced by the pseudoranges. Therefore, GNSS positioning is a multilateration problem, similar to other systems that were used before it, such as the \emph{Long Range Navigation} (LORAN) systems \cite{frank1983current}.
	
	\begin{figure}
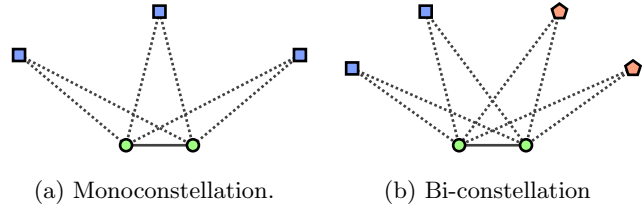

		\begin{subfigure}{.49\linewidth}
			\centering
			{\input{fig/bi_receivers_mono_constellation_measurements.pgf}}
			\caption{Monoconstellation.}
			\label{sfig: Two receivers examples mono constellation}
		\end{subfigure}
		\begin{subfigure}{.49\linewidth}
			\centering
			{\input{fig/bi_receivers_bi_constellation_measurements.pgf}}
			\caption{Bi-constellation}
			\label{sfig: Two receivers examples bi constellation}
		\end{subfigure}
		\caption{Graph of measurements of cooperative networks. Each network is composed of two GNSS receivers represented by circles. One constellation of satellites is represented by squares and another by pentagons. The dotted lines represent pseudorange measurements, and the solid lines inter-receiver distance measurements.}
		\label{fig: Two receivers examples}
	\end{figure}
	
	The minimum number of pseudorange measurements required to locate a receiver is $3 + C$ where $C$ is the number of GNSS constellations used. The usual justification is that there are $3+C$ unknowns in the system: $3$ for the receiver position, plus $1$ per GNSS constellation clock bias. Each pseudorange equation is used to solve for one unknown, and therefore the localization problem is solvable with $3+C$ pseudoranges. The recent development of network systems raises the question of their cooperative positioning. When a node is unable to use GNSS (completely or partially), it can cooperate with the other nodes in the network to estimate its position, \eg{} by measuring distances with its neighbors and performing trilateration. Collaborative positioning has been proved to be an efficient solution to improve the precision \cite{minetto2019trade} and to extend the availability \cite{causa2018multi} of the GNSS. However, the localizability of a network of GNSS receivers from a set of given pseudorange measurements has never been answered. In general, the minimal number of pseudorange measurements required for locating a network is unknown. For example, consider a pair of receivers measuring pseudoranges from only $2+C$ satellites from $C$ different GNSS constellations. They cannot estimate their positions as it would require one additional measurement. Assume they also measure the distance between them, as illustrated in Fig.~\ref{fig: Two receivers examples}. With this additional measurement, can they estimate their positions? The aim of this paper is to answer this question for a general class of cooperating networks. The solvability of such localization problems is intrinsically linked to the notion of rigidity.
	For example, global rigidity ensures that the problem has a unique solution as detailed for distance measurements in \cite{aspnes2006theory}, or for angle measurements in \cite{fang2020angle}. Nevertheless, in many applications (local) rigidity is sufficient to solve the problem.
	Rigidity is the capacity of a structure to maintain its shape by solely maintaining some constraints between pairs of nodes (called agents in the sequel). In general, it implies that the Jacobian matrix of the measurements has a maximal rank \cite{asimow1979rigidity} which is sufficient to design estimation algorithms. Rigidity was first introduced with distance constraints to study the stability of bar-and-joint structures, see \eg{} \cite{jackson2007notes} for an introduction. Since then, it has been adapted to other forms of interactions between the agents, \eg{} bearing rigidity \cite{zhao2015bearing} and angle rigidity \cite{chen2020angle, fang2020angle} have been derived for agents equipped with cameras providing the angles or the bearings between the agents. Recently, Joint Position-Clock (JPC) rigidity  \cite{wen2022clock} has been introduced to treat symmetrical time-of-flight measurements. Unfortunately, in GNSS, the signals are sent only from satellites to receivers, which results in asymmetrical measurements. Most of the rigidities introduced in the literature are based on symmetrical measurements and symmetrical constraints. Consequently, the frameworks are described by undirected graphs of constraints. Pseudoranges are by nature asymmetrical, and therefore they cannot be treated with the existing rigidity theories.

	In this paper, we focus on sensor networks performing pseudorange measurements and distance measurements. We answer the question of the localizability of the network given the set of measurements, \ie{} we answer the following question. Is the positions of the agents identifiable given the measurements? To answer this question, we introduce and characterize a new form of asymmetrical rigidity adapted to the pseudorange context. The main contributions are the following:
		\begin{enumerate}
			\item Pseudorange rigidity is introduced. The main difference with the usual rigidities is that the graphs of pseudorange frameworks are directed to account for the asymmetry of the measurements. We prove that pseudorange rigidity is a generic property of the underlying undirected graph. Furthermore, we prove that the rank of the pseudorange rigidity matrix can be expressed by separating the spatial variables from the clock parameters. The consequence is that a pseudorange graph is rigid if and only if it is the combination of a distance rigid graph and a connected graph.
			\item Pseudorange rigidity is extended to define the rigidity of GNSS networks. GNSS rigidity brings a new justification to the common wisdom about the minimum number of satellites required to locate a single receiver. Furthermore, it applies to cooperative networks of GNSS receivers. It helps to identify the connections required to locate a network and to design localization algorithms.
			\item The interests of pseudorange rigidity for formation control are presented. To preserve the $2$D formation of a group of agents, the pseudorange point of view allows to reduce by up to $25$\% the number of measurements, with respect to a classical two-way ranging method. For $3$D formations, this number is reduced by up to $33$\%.
			\item New algebraic concepts are exposed to isolate spatial variables from the other variables in rigidity matrices. The technique employed in this paper may be reused for other types of rigidity, \eg{} JPC rigidity \cite{wen2022clock}.
	\end{enumerate}
	
	The rest of this paper is organized as follows. Section~\ref{sec: Preliminaries} introduces the notion of pseudorange and provides some background on rigidity. Pseudorange rigidity is introduced in Section~\ref{sec: Pseudorange rigidity}. The rigidity of pseudorange graphs is characterized in Section~\ref{sec: Characterization of generic pseudorange rigidity}. Section~\ref{sec: GNSS rigidity} adapts these new results to the GNSS context and Section~\ref{sec: Pseudorange rigidity for formation control} discusses the applications of pseudorange rigidity for formation control. Finally, Section~\ref{sec: Concluding remarks} gives some perspectives.
	
	\medbreak
	\textbf{Notation.}
	In the sequel, matrices are denoted in uppercase boldface variables \eg{} $\mymatrix{M} \in \R^{n\times m}$, and the Euclidean norm of a vector is denoted as $\norm{\myvector{x}}$. The dimension of the space in which the agents live is denoted as $d$, and it is assumed set. In practice $d = 3$, but the results presented here are valid for any $d \ge 2$. A simple graph with a vertex set $V$ and an edge set $E$ is denoted as $G=(V,E)$. Undirected simple graphs are named with Latin letters, \eg{} $G$, while directed simple graphs are named with Greek letters, \eg{} $\Gamma$. A directed edge is called an arc. For a simple directed graph $\Gamma = (V,E)$, $\tilde \Gamma = (V, \tilde E)$ denotes the undirected multi-graph induced by $\Gamma$ where $\tilde E$ denotes the multiset of the edges. An edge can appear 0 times once or twice in the $\tilde E$. For a general background on graph definitions and properties (incidence matrix, connectivity, cycles, etc.), we refer to \cite{bollobas1998modern}. The cardinality of a set $A$ is denoted as $\abs{A}$.

	\section{Preliminaries}\label{sec: Preliminaries}

	\subsection{Pseudorange measurements}
	
	In the sequel, \guillemets{agent} is a generic term referring to satellites and receivers.
	In the context of GNSS, a pseudorange is an indirect measurement of a distance between two agents based on the time-of-flight of a signal. As the time of emission and the time of reception of the signal are measured by two different and potentially unsynchronized clocks, it is a biased version of the distance. The time $t^i$ given by the clock of the $i$th agent is modeled as:
	\begin{equation}\label{eq: Model of clock 1}
		t^i = t + \tau_i,
	\end{equation}
	where $t$ is some virtual absolute time and $\tau_i$ is the bias with respect to that time. The position of the $i$th agent is denoted as $\myvector{x}_i \in \R^d$. Its clock bias is taken premultiplied by the signal's celerity $c$ to be homogeneous to a length, and it is denoted as $\beta_i \triangleq c\tau_i$. The pseudorange from an agent $u$ to another $v$ is:
	\begin{equation}
		\rho_{uv}\triangleq c\left(t_{r(uv)}^v - t_{e(uv)}^u\right) = \norm{\myvector{x}_u - \myvector{x}_v} + \beta_v - \beta_u
	\end{equation}
	where $t_{e(uv)}$ and $t_{r(uv)}$ denote respectively the time of emission and the time of reception of a signal sent from $u$ to $v$. Note that the pseudorange equals the distance if and only if $\beta_u = \beta_v$, \ie{} if the agents' clocks are synchronized.
	
	\subsection{Some definitions for rigidity analysis}\label{ssection: Generalities on rigidity}
	Consider a network composed of $n$ agents in $\R^d$. The $i$th agent is characterized by a vector $\myvector{p}_i$. Usually $\myvector{p}_i$ is the position $\myvector{x}_i \in \R^d$ of the agent, but $\myvector{p}_i$ can contain other parameters such as clock parameters. The \textbf{configuration} of the network is the vector $\myvector{p} = \begin{bmatrix}	\myvector{p}_1^\intercal & \dots & \myvector{p}_n^\intercal 
	\end{bmatrix}^\intercal$. The interactions between the agents are represented in a graph $G =(V,E)$. The edges in $E$ represent constraints between the agents in $V$, \eg{} distance constraints or bearing constraints. The number of constraints is $m = \abs{E}$. The pair $(G, \myvector{p})$ is called a \textbf{framework}. The evaluation of the constraints is the vector $\myvector{F}_X(G, \myvector{p})= \begin{pmatrix} \dots & f_{X, uv}(\myvector{p})^{\intercal} &\dots \end{pmatrix}^\intercal \in \R^{m k_X}$, where in this definition $X$ denotes the type of rigidity, $f_{X, uv}(\myvector{p}) \in \R^{k_X}$ is the evaluation of the constraint induced by the edge $uv \in E$, and the edges are assumed to be ordered.
	
	Two frameworks $(G, \myvector{p})$ and $(G, \myvector{p}')$ are said to be \textbf{equivalent} if $\myvector{F}_X(G, \myvector{p}) = \myvector{F}_X(G, \myvector{p}')$. They are said to be \textbf{congruent} if $\myvector{F}_X(K, \myvector{p}) = \myvector{F}_X(K,\myvector{p}')$, where $K$ is the complete graph whose edge set is $E_K = \set{uv \in V^2, u < v}$.
	
	A framework $(G, \myvector{p})$ is \textbf{rigid} if there exists $\epsilon > 0$ such that for all $\myvector{p}'$ satisfying $\norm{\myvector{p} - \myvector{p}'} < \epsilon$, $(G, \myvector{p})$ and $(G, \myvector{p}')$ equivalent implies $(G, \myvector{p})$ and $(G, \myvector{p}')$ congruent. A non-rigid framework is called \textbf{flexible}. A framework $(G, \myvector{p})$ is \textbf{globally rigid} if for all $\myvector{p}'$, $(G, \myvector{p})$ and $(G,\myvector{p}')$ equivalent implies $(G, \myvector{p})$ and $(G, \myvector{p}')$ congruent.
	
	The \textbf{rigidity matrix} \cite{jackson2007notes} of a framework is defined as the Jacobian of the evaluation function:
		\begin{equation}
			\mymatrix{R}_X(G, \myvector{p}) \triangleq \frac{\partial \myvector{F}_X(G, \myvector{p})}{\partial \myvector{p}}.
	\end{equation}
	
	A \textbf{velocity vector} \cite{jackson2007notes} $\myvector{q}$ is a variation of $\myvector{p}$. If $\mymatrix{R}_X(G, \myvector{p})\myvector{q} = \myvector{0}$, $\myvector{q}$ is said to be \textbf{admissible} for the framework. A velocity vector admissible for all frameworks is called \textbf{trivial}. A framework $(G, \myvector{p})$ is \textbf{infinitesimally rigid} if all its admissible velocity vectors are trivial.
	
	\subsection{Different notions of rigidity}\label{ssec: Different rigidities}
	
	This section provides a summary of different forms of rigidity. It does not intend to be exhaustive but focuses on the rigidities related with pseudorange rigidity introduced in the next section.
	
	\subsubsection{Distance rigidity}
	
	Distance rigidity is the original and most studied rigidity. An agent is represented by its spatial coordinates $\myvector{p}_i = \myvector{x}_i \in \R^d$ and the edges of $G$ constrain the distances between pairs of agents. The constraint induced by an edge $uv$ is $f_{D, uv}(\myvector{p}) \triangleq \norm{\myvector{x}_u - \myvector{x}_v} = \delta_{uv}$ where $\delta_{uv}$ is a given distance. The trivial motions correspond to translations and rotations of the framework.
	
	\subsubsection{Distance rigidity in elliptical and hyperbolic space}
	
	Distance rigidity has been extended to non-Euclidean spaces, see \eg{} \cite{saliola2007some, schulze2012coning}. An interesting case for our study is the Minkowski hyperbolic space. In this case, the agents are parameterized by $\myvector{p}_i = \begin{pmatrix} \myvector{x}_i^\intercal & \gamma_i \end{pmatrix}^\intercal \in \R^{d+1}$, where $\gamma_i$ is a scalar. The constraint induced by an edge is $f_{M, uv}(\myvector{p}) \triangleq \norm{\myvector{x}_u - \myvector{x}_v}^2 - (\gamma_u - \gamma_v)^2 = \delta_{uv}$ where $\delta_{uv}$ is a given \guillemets{distance} (that may be negative). Unlike a pseudorange, the Minkowski distance $f_{M, uv}(\myvector{p})$ is symmetrical in $u$ and $v$.
	
	\subsubsection{Bearing rigidity}
	
	Bearing rigidity focuses on preserving the shape of a framework by constraining the bearings between the agents, see \eg{} \cite{zhao2019bearing} for an overview. The $i$th agent is also represented by its spatial coordinates $\myvector{p}_i = \myvector{x}_i \in \R^d$. The constraint induced by an edge $uv$ of the graph is $f_{B, uv}(\myvector{p}) \triangleq \frac{\myvector{x}_u - \myvector{x}_v}{\norm{\myvector{x}_u - \myvector{x}_v}} = \mymatrix{\alpha}_{uv} \in \R^d$ where $\mymatrix{\alpha}_{uv}$ is a given bearing. The trivial motions correspond to translations and stretching of the framework.
	
	\subsubsection{Clock rigidity}
	
	Clock rigidity was recently introduced in \cite{wen2022clock}. It focuses on preserving the synchronization between the clocks of the agents. The clock model considered in clock rigidity has two parameters. The time $t^i$ of the $i$th agent's clock is modeled as:
		\begin{align}\label{eq: Model of clock 2}
			t^i &= w_i t + \tau_i, &\Leftrightarrow&& t &= \alpha_i t^i + \lambda_i,
		\end{align}
		where $w_i$ and $\alpha_i$ are clock skews, and $\tau_i$ and $\lambda_i$ are clock biases. The $i$th agent's clock is parameterized by $\myvector{p}_i = (\alpha_i, \lambda_i) \in \R^2$. The authors assumed that times-of-flight are always measured in both directions \cite[Assumption 1]{wen2022clock}. Under this assumption, each pair of measurements imposes:
		\begin{subequations}\label{eq: Position-Clock constraints}
			\begin{align}
				\norm{\myvector{x}_u - \myvector{x}_v} &= c(\alpha_v t^v_{r(uv)} + \lambda_v - \alpha_u t^u_{e(uv)} - \lambda_u), \\
				\norm{\myvector{x}_u - \myvector{x}_v} &= c(\alpha_u t^u_{r(vu)} + \lambda_u - \alpha_v t^v_{e(vu)} - \lambda_v).
			\end{align}
		\end{subequations}
		Therefore, an edge $uv$ of $G$ induces the following constraint on the clock parameters: $f_{C, uv}(\myvector{p}) \triangleq \alpha_u \bar T^u + \lambda_u - \alpha_v \bar T^v - \lambda_v = 0$ where $\bar T^u$ and $\bar T^v$ are constant. The authors proved that clock rigidity is strongly connected with bearing rigidity in $2$D \cite{wen2022clock}.
	
	\subsubsection{Joint Position-Clock rigidity}
	JPC rigidity was also introduced in \cite{wen2022clock} to preserve both the distances and the clock synchronizations between the agents. In this case, $\myvector{p}_i = \begin{pmatrix}\myvector{x}_i^\intercal & \alpha_i & \lambda_i \end{pmatrix}^\intercal \in \R^{d+2}$ is the combination of the spatial coordinates and the clock parameters of the agents. The edges of $G$ constrain the two times-of-flight between the agents (in both directions). An edge $uv \in G$ constrains the two constraints \eqref{eq: Position-Clock constraints}. To study JPC rigidity, the authors of \cite{wen2022clock} used the bi-directional assumption to separate the distance constraint from the clock constraint.
	
	\subsubsection{Coordinated rigidity}
	Coordinated rigidity was introduced in \cite{nixon2018rigidity, schulze2022frameworks}. It is an extension of distance rigidity that allows a new type of movements. The edges are split into groups and the edges among a group can expand simultaneously. Formally, the graph is enriched with a coloration of its edges $c = (E_0, \dots, E_k)$, and the configuration is enriched with a vector $r \in \R^k$ associating to each color a bias. Two frameworks $(G, c, \myvector{p}, \myvector{r})$ and $(G, c, \myvector{p}', \myvector{r}')$ are said to be equivalent if:
		\begin{align*}
			\norm{ \myvector{p}_u - \myvector{p}_v} &= \norm{ \myvector{p}'_u - \myvector{p}'_v} & \forall uv&\in E_0, \\
			\norm{ \myvector{p}_u - \myvector{p}_v} + r_l &= \norm{ \myvector{p}'_u - \myvector{p}'_v} + r'_l & \forall uv&\in E_l, \,  1 \le l \le k.
		\end{align*} 
		This context fits for GNSS: each color corresponds to a pair receiver-constellation and each bias to the offset of the receiver's clocks on the constellation time. The pseudorange point of view is however more general, \eg{} it can be applied if the receivers cooperate by pseudorange measurements.
	
	\subsection{About the lack of asymmetrical rigidity}
	
	All the constraints presented Section~\ref{ssec: Different rigidities} are symmetrical: there is no difference between constraining the pair $(u,v)$ or the pair $(v,u)$. For clock rigidity and JPC rigidity, even if the measurements are not symmetrical, the symmetry comes from the fact that the times-of-flight are measured in both directions. The specificity of our study lies in the asymmetry of the pseudoranges. In the GNSS context, we cannot assume that the pseudoranges are measured in both directions as they are taken only from satellites to receivers. This motivates the need to study pseudorange rigidity as a new concept. The important difference with the literature is that pseudorange graphs are directed to account for the asymmetry of the measurements. JPC rigidity can also be considered in an asymmetrical context. In Section~\ref{sec: Pseudorange rigidity for formation control}, we discuss this extension and the main difference with pseudorange rigidity.

	\section{Pseudorange rigidity}\label{sec: Pseudorange rigidity}
	
	\subsection{Pseudorange frameworks}
	
	Consider a network of $n$ agents, the $i$th agent is parameterized by $\myvector{p}_i = \begin{pmatrix} \myvector{x}_i^\intercal & \beta_i \end{pmatrix}^\intercal \in \R^{d+1}$ where $\myvector{x}_i$ is its position and $\beta_i$ its clock bias. Consider also a directed graph $\Gamma = (V, E)$ representing the pseudorange measurements between the agents. The set $E$ is a set of arcs (directed edges). To maintain generality, no restriction on the topology of the graph is imposed. For a pair of vertices $(u,v) \in V^2$, $E$ can contain the arc $uv$, the arc $vu$, both arcs or none of them. The pseudorange configuration is the vector $\myvector{p} = \begin{pmatrix}	\myvector{x}_1^\intercal & \dots & \myvector{x}_n^\intercal & \beta_1 & \dots & \beta_n \end{pmatrix}^\intercal \in \R^{n(d+1)}$ and the pseudorange framework is the pair $(\Gamma, \myvector{p})$. Note that the spatial parameters have been grouped for convenience.
	An arc $uv$ imposes the constraint:
	\begin{equation}\label{eq: Pseudo-range constraint}
		f_{P, uv}(\myvector{p}) \triangleq \norm{\myvector{x}_u - \myvector{x}_v} + \beta_v - \beta_u = \rho_{uv},
	\end{equation}
	where $\rho_{uv}$ is a given pseudorange.
	The definition of congruence, equivalence, and rigidity of pseudorange frameworks are the same as introduced in Section~\ref{ssection: Generalities on rigidity}. However, $K$ is now the complete directed graph whose arc set is $E_K = \set{uv \in V^2, u \ne v}$.
	
	The complexity of pseudorange rigidity comes from the asymmetry of $\Gamma$. Without this asymmetry, the problem would be of no interest. Indeed, two opposite pseudoranges $\rho_{uv}$ and $\rho_{vu}$ constrain both the distance and the bias difference between the agents:
		\begin{align}
			\norm{\myvector{x}_u - \myvector{x}_v} &= \frac{\rho_{uv} + \rho_{vu}}{2}, & \beta_u - \beta_v &= \frac{\rho_{vu} - \rho_{uv}}{2}. \label{eq: Symmetrical constraints}
		\end{align} In that case, the spatial constraints and the bias constraints can be separated. To {rigidify} the biases, the graph must be connected, and to {rigidify} the positions, the graph must be distance rigid. As distance rigid graphs are connected, pseudorange rigidity and distance rigidity are equivalent for symmetrical pseudorange graphs. The study of pseudorange rigidity is however far less trivial when $\Gamma$ is not symmetrical.
	
	The decoupling between the spatial and the bias constraints can also be performed if $\Gamma$ has a spanning tree of symmetrical arcs. In that case, all the bias differences are also set, and pseudorange constraints become distance constraints. Hence, if $\Gamma$ has a spanning tree of symmetrical arcs, pseudorange rigidity and distance rigidity are also equivalent. Our main result in Section~\ref{sec: Characterization of generic pseudorange rigidity} states that in fact, the spanning tree does not need to be formed of symmetrical arcs: a pseudorange graph is rigid if a distance rigid graph can be extracted while the remaining arcs form a connected graph. This result is based on the characterization of pseudorange infinitesimal rigidity, and therefore on the characterization of the rank of the pseudorange rigidity matrix.
	
	Section \ref{ssec: Examples of pseudorange frameworks} proposes four examples of pseudorange frameworks to underline the importance of the orientation of the arcs, and then Section \ref{ssec: Pseudorange generic rigidity} introduces pseudorange generic rigidity. The main results, Theorem~\ref{the: Generic rank} and its Corollaries~\ref{cor: Generic rigidity}-\ref{cor: Decomposition}, are stated in Section~\ref{sec: Characterization of generic pseudorange rigidity}.
	
	\subsection{Examples of pseudorange frameworks}\label{ssec: Examples of pseudorange frameworks}
	
	\begin{figure}
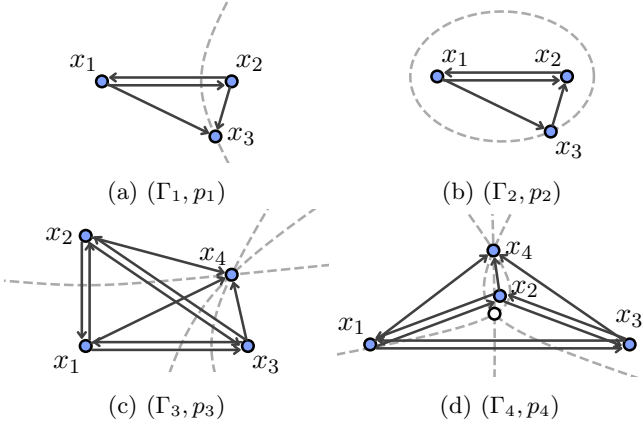

		\centering
		\null\hfill
		\begin{subfigure}{.49\linewidth}
			\centering
			\resizebox{\linewidth}{!}{\input{fig/pseudorange_graph_hyperbolic_motion.pgf}}
			\caption{$(\Gamma_1, p_1)$}
			\label{sfig: Two arcs hyperbolic constraint illustration}
		\end{subfigure}
		\hfill
		\begin{subfigure}{.49\linewidth}
			\centering
			\resizebox{\linewidth}{!}{\input{fig/pseudorange_graph_elliptical_motion.pgf}}
			\caption{$(\Gamma_2, p_2)$}
			\label{sfig: Two arcs elliptical constraint illustration}
		\end{subfigure}
		\hfill
		\begin{subfigure}{.49\linewidth}
			\centering
			\resizebox{\linewidth}{!}{\input{fig/pseudorange_graph_rigid2.pgf}}
			\caption{$(\Gamma_3, p_3)$}
			\label{sfig: Rigid framework}
		\end{subfigure}
		\begin{subfigure}{.49\linewidth}
			\centering
			\resizebox{\linewidth}{!}{\input{fig/pseudorange_graph_rigid.pgf}}
			\caption{$(\Gamma_4, p_4)$}
			\label{sfig: Rigid framework 2}
		\end{subfigure}
		\hfill\null
		\caption{Examples of $2$-dimensional pseudorange frameworks. The positions of the agents are represented by blue circles, the pseudorange constraints by arrows. The dashed curves are construction lines for the position of the last agent and the white circle in Fig.~\ref{sfig: Rigid framework 2} is another possible position for the $4$th agent. The bias axis is not represented.}
		\label{fig: Pseudorange frameworks}
	\end{figure}
	
	Fig.~\ref{fig: Pseudorange frameworks} presents four $2$-dimensional pseudorange frameworks. Each agent has $2$ spatial coordinates plus $1$ clock bias. This paragraph investigates their rigidity.
	
	First, consider the framework in Fig.~\ref{sfig: Two arcs hyperbolic constraint illustration}. This framework has $4$ arcs corresponding to the pseudoranges from $1$ to $2$, from $2$ to $1$, from $1$ to $3$ and from $2$ to $3$. They constrain the following equations:
		\begin{subequations}
			\begin{align}
				\norm{\myvector{x}_1 - \myvector{x}_2} + \beta_2 - \beta_1 &= \rho_{1,2}, \label{eq: Conic example PSR 1 to 2}\\
				\norm{\myvector{x}_1 - \myvector{x}_2} + \beta_1 - \beta_2 &= \rho_{2,1}, \label{eq: Conic example PSR 2 to 1} \\
				\norm{\myvector{x}_1 - \myvector{x}_3} + \beta_3 - \beta_1 &= \rho_{1,3}, \label{eq: Conic example PSR 1 to 3} \\
				\norm{\myvector{x}_2 - \myvector{x}_3} + \beta_3 - \beta_2 &= \rho_{2,3}. \label{eq: Conic example PSR 2 to 3}
			\end{align}
		\end{subequations}
		Since both pseudoranges between agents $1$ and $2$ are constrained, their distance and bias difference are set using \eqref{eq: Symmetrical constraints}. Moreover, since the pseudoranges $\rho_{1,3}$ and $\rho_{2,3}$ are also constrained, subtracting \eqref{eq: Conic example PSR 2 to 3} from \eqref{eq: Conic example PSR 1 to 3} gives:
		\begin{equation}
			\norm{\myvector{x}_1 - \myvector{x}_3} - \norm{\myvector{x}_2 - \myvector{x}_3} = \rho_{1,3} - \rho_{2,3} + \beta_1 - \beta_2 = \text{cst}.
		\end{equation}
		Therefore, the position of agent~$3$ lies on a branch of hyperbola whose foci are $\myvector{x}_1$ and $\myvector{x}_2$. This branch is represented by the dashed line in the figure. The bias $\beta_3$ is obtained by reinjecting the distance into either \eqref{eq: Conic example PSR 1 to 3} or \eqref{eq: Conic example PSR 2 to 3} and is not constant along this curve: it decreases as the distance $\norm{\myvector{x}_1 - \myvector{x}_3}$ increases. Moving agent~$3$ along this $3$-dimensional curve of positions creates a non-congruent but equivalent configuration, therefore, this framework is flexible.
	
	The second framework in Fig.~\ref{sfig: Two arcs elliptical constraint illustration} is very similar. The only difference lies in the direction of the arc between $2$ and $3$. In Fig.~\ref{sfig: Two arcs hyperbolic constraint illustration}, $\rho_{2,3}$ was constrained whereas it is now $\rho_{3,2}$. This transforms \eqref{eq: Conic example PSR 2 to 3} to:
	\begin{align}
		\norm{\myvector{x}_2 - \myvector{x}_3} + \beta_2 - \beta_3 &= \rho_{3,2}. \label{eq: Conic example PSR 3 to 2}
	\end{align}
	Moreover, summing \eqref{eq: Conic example PSR 1 to 3} and \eqref{eq: Conic example PSR 3 to 2} gives:
	\begin{equation}
		\norm{\myvector{x}_1 - \myvector{x}_3} + \norm{\myvector{x}_2 - \myvector{x}_3} = \rho_{1,3} + \rho_{3,2} + \beta_1 - \beta_2 = \text{cst}.
	\end{equation}
	Consequently, $\myvector{x}_3$ lies on an ellipse, also represented by a dashed line in the figure. Similarly, moving agent~$3$ on this curve of positions creates a non-congruent but equivalent configuration and this framework is flexible. These first two examples underline how important the orientations of the arcs are: flipping an arc changes the possible deformations of the framework.
	
	The third and fourth frameworks in Figures~\ref{sfig: Rigid framework} and \ref{sfig: Rigid framework 2} have the same graph which is more complex. The first three agents are fully connected, therefore, all the distances and bias differences between them are constrained: their relative positions and biases are set. The $4$th agent is connected to each of them by one unique arc. Each pair of arcs constrains the position $\myvector{x}_4$ to lie on a branch of hyperbola as in the first example. These curves are also represented by dashed lines.
	For the third framework, they intersect once at $\myvector{x}_4$, this is the only suitable position for the $4$th agent. There are no equivalent but non-congruent frameworks, therefore, $(\Gamma_3, p_3)$ is globally rigid.
	For the fourth framework, they intersect twice: at $\myvector{x}_4$ of course and at a second point represented by a white circle. Those two points are suitable positions for agent~$4$: placing it in one of these loci (with the corresponding bias) satisfies all the constraints. However, agent~$4$ cannot \emph{move} so the framework is rigid but not globally rigid. At these two loci, the associated biases are different since, for example, the distances to $\myvector{x}_2$ are different.
	
	\subsection{Pseudorange generic rigidity}\label{ssec: Pseudorange generic rigidity}
	
	This section introduces pseudorange generic rigidity and its link with pseudorange infinitesimal rigidity. In particular, the properties of the pseudorange rigidity matrix are discussed. 
	\begin{definition}
		A pseudorange configuration $\myvector{p}$ is said to be \emph{generic} if the $nd$ spatial coordinates of the agents are not a root of any non-trivial polynomial with integer coefficients. In this case, the framework $(\Gamma, \myvector{p})$ is also said to be generic.
	\end{definition}
	
	Genericity ensures, for example, that no three agents are aligned or that no four agents are on a plane. In practice, the agents are never perfectly aligned and the configuration is generic. Formally, the set of non-generic configurations is defined as the roots of the polynomials with integer coefficients. Therefore, it is countable and has a zero measure. Consequently, \emph{almost} every configuration is generic. Note that this property considers the positions of the agents but not their clock offsets.
	
	As for distance frameworks, the rigidity of generic pseudorange frameworks is equivalent to their infinitesimal rigidity, as stated in the following lemmas.
	\begin{lemma}\label{lem: Infinitesimal rigidity implies rigidity}
		If a pseudorange framework $(\Gamma, \myvector{p})$ is infinitesimally rigid, then it is rigid.
	\end{lemma}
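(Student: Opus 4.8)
The strategy is to adapt the classical argument of Asimow and Roth \cite{asimow1979rigidity} for distance frameworks to the pseudorange maps. Write $\myvector{F} = \myvector{F}_P(\Gamma, \cdot)$ and $\myvector{G} = \myvector{F}_P(K, \cdot)$ for the evaluation maps of $\Gamma$ and of the complete directed graph $K$, and set $\rE = \myvector{F}^{-1}(\myvector{F}(\myvector{p}))$ (the configurations equivalent to $\myvector{p}$) and $\rC = \myvector{G}^{-1}(\myvector{G}(\myvector{p}))$ (those congruent to $\myvector{p}$). Both maps are smooth near $\myvector{p}$, since the only source of nonsmoothness in \eqref{eq: Pseudo-range constraint} is the Euclidean norm at coincident positions, and the endpoints of each arc are distinct at a framework. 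Because the arcs of $\Gamma$ form a subset of those of $K$, one always has $\rC \subseteq \rE$, so proving local rigidity amounts to showing that $\rC$ and $\rE$ coincide on some ball $\norm{\myvector{p} - \myvector{p}'} < \epsilon$. By definition the trivial velocity vectors are $T = \ker \mymatrix{R}_P(K, \myvector{p}) = \ker d\myvector{G}(\myvector{p})$, and since trivial motions are admissible for every framework, $T \subseteq \ker d\myvector{F}(\myvector{p})$ always; infinitesimal rigidity is precisely the reverse inclusion, i.e. $\ker d\myvector{F}(\myvector{p}) = T$, equivalently $\rank d\myvector{F}(\myvector{p}) = \rank d\myvector{G}(\myvector{p})$.

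I would then identify $\rC$ geometrically: two configurations share all complete-graph pseudoranges if and only if all their pairwise distances and all their pairwise bias differences agree, so $\rC$ is exactly the orbit of $\myvector{p}$ under the group of Euclidean isometries acting on the positions together with a global shift of the biases. At a spanning configuration this orbit is a smooth manifold whose tangent space at $\myvector{p}$ is $T$, whence $\dim \rC = \dim T$. For the equivalent set, lower semicontinuity of the rank gives $\rank d\myvector{F}(\myvector{q}) \ge \rank d\myvector{F}(\myvector{p})$ on a neighborhood $U$ of $\myvector{p}$; applying the constant-rank theorem on the open subset of $U$ where this rank is attained realizes $\rE$ there as a manifold of dimension $\dim T$, while the fibers through points of strictly higher rank are of even smaller dimension. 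Thus near $\myvector{p}$ the set $\rE$ has dimension at most $\dim T$. The chain ``$\rC \subseteq \rE$, with $\rC$ a manifold of dimension exactly $\dim T$ and $\rE$ of dimension at most $\dim T$'' forces $\rC$ and $\rE$ to agree on a neighborhood of $\myvector{p}$, which is the desired statement.

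The delicate step is the dimension bookkeeping, because $\myvector{p}$ need not be a regular point of the maps involved: the rank of a pseudorange rigidity matrix is only lower semicontinuous, so $\rE$ is not a priori smooth at $\myvector{p}$, and, more subtly, the identity $\dim \rC = \dim T$ can fail. Indeed, the tangent space to the orbit $\rC$ is always contained in $T = \ker d\myvector{G}(\myvector{p})$, but equality holds only when $\myvector{p}$ affinely spans $\R^d$; for a degenerate configuration (e.g. collinear agents) the kernel $T$ contains extra affine infinitesimal deformations that preserve all distances without corresponding to any finite congruence, so $\dim \rC < \dim T$ and the comparison collapses. This is exactly why the equivalence between rigidity and infinitesimal rigidity is developed in the generic setting of this section: genericity guarantees affine spanning, making $\myvector{p}$ a regular point of $\myvector{G}$ with $\dim \rC = \dim T$, after which the lower-semicontinuity argument controls $\rE$ as above. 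Spelling out this regular-point reduction, following \cite{asimow1979rigidity}, is the main work of the proof; the remaining verifications (smoothness of $\myvector{F}$ and $\myvector{G}$ away from coincident positions, and the inclusion $\rC \subseteq \rE$) are immediate.
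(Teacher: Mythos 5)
Your proposal is correct and follows precisely the Asimow--Roth route that the paper itself defers to (the paper gives no proof of this lemma, citing \cite{asimow1978rigidity} and \cite{gluck1975almost}): compare the congruence orbit $\rC$ with the fiber $\rE$ of the evaluation map by a dimension count at a regular point. One clarification on your closing paragraph, though: the lemma carries no genericity hypothesis, and none is needed. Under the paper's operative definition of infinitesimal rigidity, $\rank \mymatrix{R}_P(\Gamma,\myvector{p}) = S_P(n,d)$, which is the global maximum of the rank over all graphs and configurations; by lower semicontinuity of the rank, both $\mymatrix{R}_P(\Gamma,\cdot)$ and $\mymatrix{R}_P(K,\cdot)$ then have rank identically equal to $S_P(n,d)$ on a neighborhood of $\myvector{p}$, so $\myvector{p}$ is automatically a regular point of both maps and the positions automatically affinely span $\R^d$ (degenerate configurations never attain the maximal rank). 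The failure mode you describe, $\dim\rC < \dim T$ at a non-spanning configuration, is therefore vacuous here: such a framework is never infinitesimally rigid in the sense of the paper's definition. Genericity enters only in the converse direction, Lemma~\ref{lem: Equivalence rigidities}, where one must rule out rigid-but-not-infinitesimally-rigid frameworks; for the present lemma your regular-point reduction goes through unconditionally.
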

	\begin{lemma}\label{lem: Equivalence rigidities}
		Let $(\Gamma, \myvector{p})$ be a generic pseudorange framework. $(\Gamma, \myvector{p})$ is rigid if and only if it is infinitesimally rigid.
	\end{lemma}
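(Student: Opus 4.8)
The plan is to transpose the classical argument of Asimow and Roth to the directed pseudorange setting, exploiting that one implication is already available. Indeed, Lemma~\ref{lem: Infinitesimal rigidity implies rigidity} gives that infinitesimal rigidity implies rigidity with no genericity assumption, so the whole content lies in the converse. I would prove it through the notion of a \emph{regular} configuration, namely one at which $\rank \mymatrix{R}_P(\Gamma, \myvector{p})$ attains its maximum over all configurations (equivalently, at which the rank is locally constant). The argument then splits into two independent pieces: an Asimow--Roth dichotomy valid at every regular point, and the statement that every generic configuration is regular.

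For the first piece, suppose $\myvector{p}$ is regular both for $\Gamma$ and for the complete directed graph $K$ used in the definition of congruence. By the constant-rank theorem, the germ at $\myvector{p}$ of the set of frameworks equivalent to $(\Gamma,\myvector{p})$ is a smooth manifold of dimension $n(d+1) - \rank \mymatrix{R}_P(\Gamma,\myvector{p})$, and likewise the germ of congruent frameworks has dimension $n(d+1) - \rank \mymatrix{R}_P(K,\myvector{p})$. Since congruence always implies equivalence (because the arc set of $\Gamma$ is contained in that of $K$), rigidity forces these two germs to coincide, hence to have equal dimension, which yields $\rank \mymatrix{R}_P(\Gamma,\myvector{p}) = \rank \mymatrix{R}_P(K,\myvector{p})$. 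As the kernels are always nested, $\ker \mymatrix{R}_P(K,\myvector{p}) \subseteq \ker \mymatrix{R}_P(\Gamma,\myvector{p})$, equality of ranks forces equality of kernels; and $\ker \mymatrix{R}_P(K,\myvector{p})$ is exactly the space of trivial velocities, so $(\Gamma,\myvector{p})$ is infinitesimally rigid.

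The delicate piece is to show that a generic configuration is regular, simultaneously for $\Gamma$ and for $K$. This is where pseudoranges depart from the squared-distance case: the spatial block of $\mymatrix{R}_P$ carries the normalized directions $\frac{\myvector{x}_u - \myvector{x}_v}{\norm{\myvector{x}_u - \myvector{x}_v}}$, so its minors are not polynomial in the coordinates. I would first scale each arc-row of $\mymatrix{R}_P(\Gamma,\myvector{p})$ by the positive scalar $\norm{\myvector{x}_u - \myvector{x}_v}$; this invertible row operation preserves the rank, turns the spatial block into the polynomial entries $\myvector{x}_u - \myvector{x}_v$, and merely rescales the clock block, which is the configuration-independent signed incidence matrix. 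Introducing auxiliary variables $\ell_{uv}$ constrained by $\ell_{uv}^2 = \norm{\myvector{x}_u - \myvector{x}_v}^2$, every minor of the scaled matrix becomes a polynomial $\tilde M$ with integer coefficients in the coordinates and the $\ell_{uv}$, evaluated along the branch $\ell_{uv} = \norm{\myvector{x}_u - \myvector{x}_v}$.

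The main obstacle is to convert the vanishing of such a minor at a generic point into an honest integer-coefficient polynomial relation among the spatial coordinates alone. I would fix a witness configuration of maximal rank (which exists since the rank is bounded) and a minor $M$ that is nonzero there, then rationalize the square roots: the coordinate ring $\Q[\myvector{x}][\ell_{uv}]/(\ell_{uv}^2 - \norm{\myvector{x}_u - \myvector{x}_v}^2)$ is an integral domain, because each $\ell_{uv}^2 - \norm{\myvector{x}_u - \myvector{x}_v}^2$ is irreducible over $\Q$ (a sum of squares of linear forms is not a perfect square for $d \ge 2$) and the $\ell_{uv}$ are independent. Taking the field norm of $M$ down to $\Q(\myvector{x})$ produces a ratio $\Phi/\Psi$ of nonzero integer-coefficient polynomials in $\myvector{x}$, $\Phi$ being a product of sign-conjugates of $M$. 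If a generic $\myvector{p}^0$ had sub-maximal rank, then $M(\myvector{p}^0) = 0$ kills the identity conjugate, hence $\Phi(\myvector{p}^0) = 0$; since $\Psi(\myvector{p}^0) \ne 0$ at a generic point, $\Phi$ is a nonzero polynomial with integer coefficients annihilated by the spatial coordinates, contradicting genericity. The point needing the most care is precisely the irreducibility/integral-domain claim that guarantees the field norm of a nonzero $M$ is nonzero; once this is secured, genericity implies regularity and, combined with the dichotomy above, closes the equivalence.
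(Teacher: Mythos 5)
Your proposal is correct and is essentially the argument the paper intends: the paper does not write this proof out, saying only that it is an adaptation of the Asimow--Roth/Gluck proofs for distance frameworks, and your two-step scheme (the constant-rank dichotomy at regular points, plus ``generic implies regular'') is exactly that adaptation, with Lemma~\ref{lem: Infinitesimal rigidity implies rigidity} supplying the easy direction. Two remarks. First, your rationalization step is not really new machinery relative to the paper: after your row scaling you recover precisely the matrix of \eqref{eq: Pseudorange rigidity matrix}, whose minors live in the space $\L(\tilde E)$ of Section~\ref{sec: Characterization of generic pseudorange rigidity}, and your field norm $\Phi=\prod_\sigma \sigma(M)$ is the one-shot version of the inductive conjugation $f\mapsto f\bar f$ used in the proof of Lemma~\ref{lem: Extension generic nullity}; so the ``generic implies regular'' half of your argument is exactly Lemmas~\ref{lem: Field extension} and \ref{lem: Extension generic nullity} applied to the maximal-order minors of $\mymatrix{R}_P(\Gamma,\cdot)$ and $\mymatrix{R}_P(K,\cdot)$. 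Second, the one place you are too quick is the integral-domain claim you yourself flag: irreducibility of each $\ell_{uv}^2-\norm{\myvector{x}_u-\myvector{x}_v}^2$ separately does not make the iterated quotient a domain; you need that each $D_{u,v}$ remains a non-square after adjoining the previous square roots, equivalently that no product of a nonempty subset of the $D_{u,v}$ is a rational function --- this is hypothesis \ref{it: H3} of Lemma~\ref{lem: Field extension multidimensional rigidity} and is the actual content of Lemma~\ref{lem: Field extension}, so writing ``the $\ell_{uv}$ are independent'' states the needed fact rather than proving it. Finally, to land on Definition~\ref{def: Generic configuration} (rank equal to $S_P(n,d)$) rather than only ``all admissible velocities are trivial,'' add the standard observation that $\rank \mymatrix{R}_P(K,\myvector{p})=S_P(n,d)$ at a generic configuration.
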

	The proofs of Lemmas~\ref{lem: Infinitesimal rigidity implies rigidity} and \ref{lem: Equivalence rigidities} are adaptions of proofs on distance rigidity of usual distance frameworks and will not be detailed. They can be found \eg{} in \cite{asimow1978rigidity} or \cite{gluck1975almost}.
	
	Consequently, to study the generic rigidity of pseudorange frameworks, we focus on their rigidity matrices.
	To match the usual form of rigidity matrix, we define the pseudorange rigidity matrix of the framework as follows:
	\begin{equation}\label{eq: Pseudorange rigidity matrix}
		\mymatrix{R}_P(\Gamma, \myvector{p}) \triangleq \mymatrix{D}(\Gamma,\myvector{p})\frac{\partial \myvector{F}_P(\Gamma, \myvector{p})}{\partial \myvector{p}} \in \R^{m\times n(d+1)},
	\end{equation}
	 where $\mymatrix{D}(\Gamma,\myvector{p}) = \diag\left(\set{\norm{\myvector{x}_u-\myvector{x}_v}, uv \in E}\right)$ is the diagonal matrix whose $i$th entry is the distance between the points connected by the $i$th arc. With this definition, the pseudorange rigidity matrix has the following structure:
	\begin{equation}\label{eq: Decomposition M}
		\mymatrix{R}_P(\Gamma, \myvector{p}) = \begin{bmatrix}
			\mymatrix{R}_D(\Gamma, \myvector{p}) & \mymatrix{R}_S(\Gamma, \myvector{p})
		\end{bmatrix},
	\end{equation}
	where $\mymatrix{R}_D(\Gamma,\myvector{p}) \in \R^{m \times nd}$ is the distance rigidity matrix of the framework (where $\Gamma$ is viewed as an undirected multigraph), and $\mymatrix{R}_S(\Gamma, \myvector{p}) \in \R^{m \times n}$ is a rigidity matrix associated with the synchronizations of the clocks. It corresponds to the clock offset variables and is defined as:
	\begin{equation}\label{eq: Decomposition B}
		\mymatrix{R}_S(\Gamma,\myvector{p}) = \mymatrix{D}(\Gamma,\myvector{p})\mymatrix{B}(\Gamma)^\intercal,
	\end{equation}
	where $\mymatrix{B}(\Gamma)$ denotes of the incidence matrix of the graph $\Gamma$, see \eg{} \cite[p.~54]{bollobas1998modern}. For example, the rigidity matrix of the pseudorange framework in Fig.~\ref{sfig: Two arcs hyperbolic constraint illustration} is given in \eqref{eq: Example rigidity matrix} at the top of the next page.
	\begin{figure*}
		\normalsize
		\begin{equation}\label{eq: Example rigidity matrix}
			R_{P}(\Gamma,\myvector{p}) = \left[\begin{array}{ccc|ccc}
				\myvector{x}^\intercal_1 - \myvector{x}^\intercal_2 & \myvector{x}^\intercal_2 - \myvector{x}^\intercal_1 & \myvector{0}^\intercal & -\norm{\myvector{x}_1 - \myvector{x}_2} & \norm{\myvector{x}_1 - \myvector{x}_2} & 0\\
				\myvector{x}^\intercal_1 - \myvector{x}^\intercal_2 & \myvector{x}^\intercal_2 - \myvector{x}^\intercal_1 & \myvector{0}^\intercal & \norm{\myvector{x}_1 - \myvector{x}_2} & -\norm{\myvector{x}_1 - \myvector{x}_2} & 0\\
				\myvector{x}^\intercal_1 - \myvector{x}^\intercal_3 & \myvector{0}^\intercal & \myvector{x}^\intercal_3 - \myvector{x}^\intercal_1 & -\norm{\myvector{x}_1 - \myvector{x}_3} & 0 & \norm{\myvector{x}_1 - \myvector{x}_3}\\
				\myvector{0}^\intercal & \myvector{x}^\intercal_2 - \myvector{x}^\intercal_3 & \myvector{x}^\intercal_3 - \myvector{x}^\intercal_2 & 0 & -\norm{\myvector{x}_2 - \myvector{x}_3} & \norm{\myvector{x}_2 - \myvector{x}_3}
			\end{array}\right]
		\end{equation}
	\end{figure*}
	
	From the decomposition \eqref{eq: Decomposition M}, the rank of the pseudorange rigidity matrix is lower than the sum of the ranks of each block. The rank of the distance rigidity matrix is bounded by a quantity $S_D(n,d)$ \cite{asimow1978rigidity} defined as:
	\begin{equation}
		S_D(n,d) \triangleq \left\{\begin{array}{ll}
			nd - \binom{d+1}{2} & \text{if } n \ge d + 1,\\
			\binom{n}{2} & \text{if } n \le d.
		\end{array}\right..
	\end{equation}
	Moreover, the maximal rank of an incidence matrix between $n$ agents is $n-1$: the vector filled with ones is always in the cokernel of the incidence matrix.
	As a result, the rank of the rigidity matrix is bounded by a quantity $S_P(n,d)$ that depends on both the number of agents $n$ and the dimension $d$:
	\begin{equation}\label{eq: Bound rank}
		\rank \mymatrix{R}_P(\Gamma,\myvector{p}) \le S_P(n,d) \triangleq S_D(n,d) + n - 1.
	\end{equation}
	The interpretation of \eqref{eq: Bound rank} is that the trivial velocity vectors are composed of the $d$ spatial translations, the $d(d-1)/2$ spatial rotations and the bias translation.
	
	\begin{definition}\label{def: Generic configuration}
		A pseudorange framework $(\Gamma, \myvector{p})$ is said to be \emph{infinitesimally rigid} if $\rank \mymatrix{R}_P(\Gamma,\myvector{p}) = S_P(n,d)$.
	\end{definition}
	
	The next section proves that, generically, infinitesimal rigidity is a property of the graph.
	
	\section{Characterization of generic pseudorange rigidity}\label{sec: Characterization of generic pseudorange rigidity}
	
	In this section, we provide a complete characterization of the rigidity of generic pseudorange frameworks in terms of distance rigidity.
	We prove that pseudorange rigidity is a generic property of the underlying undirected graphs of the frameworks. We explicit the rank of the rigidity matrix according to possible \emph{decompositions} of the graph. Our approach is similar to the one used for distance frameworks in \eg{} \cite{hendrickson1992conditions}. The rank is considered as the order of a highest order non-vanishing minor of the matrix.
	
	For distance frameworks the minors of the rigidity matrix $\mymatrix{R}_D(G, \myvector{p})$ are polynomials with integer coefficients in the $nd$ coordinates of the agents. By definition of a generic configuration, a minor vanishes for a generic configuration only if it is the null function (in this case, it vanishes for every configuration).
	For pseudorange rigidity matrices, the minors are also functions of the $nd$ coordinates of the agents. However, contrary to the minors of distance rigidity matrices, these minor functions are not polynomials. They belong to a larger space of functions we call $\L$. Recall that \eqref{eq: Example rigidity matrix} gives an example of a pseudorange rigidity matrix.
	
	We define the space $\L$ as follows. For any set of edges $E \subseteq \{uv \mid 1 \le u < v \le n\}$, we define the space $\L(E)$ as:
	\begin{equation}\label{eq: Definition of L}
		\L(E) = \set{ \sum_{F \in \rP(E)} P_F \prod_{uv \in F} D_{u,v} \mid P_F \in \K }
	\end{equation}
	where $\K = \Q\left(X_1^{(1)}, \dots, X_n^{(d)}\right)$ is the field of rational functions with integer coefficients in $nd$ variables, $\rP(E)$ denotes the power set of $E$, and $D_{u,v}$ is the distance function in $nd$ variables:
	\begin{equation}
		D_{u,v} : (x_1^{(1)}, \dots, x_n^{(d)})  \mapsto  \sqrt{\sum_{i=1}^{d} \left(x_u^{(i)} - x_v^{(i)}\right)^2 }.
	\end{equation}
	Then, the minors of the pseudorange rigidity matrix of a framework $(\Gamma, \myvector{p})$ whose underlying undirected graph is $\tilde \Gamma = (V, \tilde E)$ belong to the space $\L = \L(\tilde E)$.
	
	The definitions of $\L(E)$ and $\K$ may seem overly complex. They have both been chosen to provide a field structure to $\L(E)$ as stated in the following lemma.
	\begin{lemma}\label{lem: Field extension}
		Let $E \subseteq \{uv \mid 1 \le u < v \le n\}$ be a set of edges and $m = \abs{E}$. Then, $\L(E) / \K $ is a field extension of degree $2^m$. Furthermore, the family $\set{\prod_{uv \in F}D_{u,v} \mid F \in \rP(E)}$ is a basis of $\L(E)$ viewed as a $\K$-vector space. We call this basis the \emph{natural basis} of $\L(E)$. 
	\end{lemma}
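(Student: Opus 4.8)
The plan is to recognize $\L(E)$ as a \emph{multiquadratic} extension of $\K$ and to reduce the entire statement to a single square-freeness fact. The first observation is that each generator $D_{u,v}$ is a square root of an element of $\K$: setting $q_{u,v} \triangleq D_{u,v}^2 = \sum_{i=1}^{d}\bigl(X_u^{(i)} - X_v^{(i)}\bigr)^2 \in \K$, the space $\L(E)$ defined in \eqref{eq: Definition of L} is exactly the $\K$-subalgebra generated by the $D_{u,v}$, $uv \in E$. I would first check that the $\K$-span of the natural family is closed under multiplication, using $\prod_{uv \in F} D_{u,v}\cdot\prod_{uv \in F'} D_{u,v} = \bigl(\prod_{uv \in F\cap F'} q_{u,v}\bigr)\prod_{uv \in F \triangle F'} D_{u,v}$, whose leading factor lies in $\K$. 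Hence $\L(E)$ is a commutative $\K$-algebra of dimension at most $2^m = \abs{\rP(E)}$, spanned by the natural family. Since the $D_{u,v}$ are honest functions, $\L(E)$ is a subring of a field, hence an integral domain, and a finite-dimensional integral domain over a field is a field. Thus $\L(E)/\K$ is already a finite field extension, and the whole lemma reduces to proving that the natural family is \emph{linearly independent} over $\K$, which gives both $[\L(E):\K]=2^m$ and the basis claim at once.

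Next I would invoke the standard structure of multiquadratic extensions. Writing $\L(E) = \K\bigl(\sqrt{q_{u,v}} : uv \in E\bigr)$ with $\operatorname{char}\K = 0$, an elementary induction on $m$ (or Kummer theory applied to $\K^{\times}/(\K^{\times})^2$) shows that the degree is $2^m$ and the natural family is a basis \emph{if and only if} the classes of the radicands $q_{u,v}$ are $\mathbb{F}_2$-linearly independent in $\K^{\times}/(\K^{\times})^2$; equivalently, if and only if no product $\prod_{uv \in F} q_{u,v}$ over a nonempty subset $F \subseteq E$ is a square in $\K$. Establishing this independence is the heart of the argument.

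The crux is therefore a unique-factorization computation in the UFD $\Q[X_1^{(1)},\dots,X_n^{(d)}]$. After the substitution $Y_i = X_u^{(i)} - X_v^{(i)}$, the radicand $q_{u,v} = \sum_{i=1}^{d} Y_i^2$ is a nondegenerate quadratic form of rank $d \ge 2$. A reducible homogeneous quadratic is a product of two linear forms, whose associated symmetric matrix has rank at most $2$, so every form of rank at least $3$ is irreducible over any field; for the remaining case $d=2$, the form $Y_1^2 + Y_2^2$ is irreducible over $\Q$ because $-1 \notin (\Q^{\times})^2$ (a factorization into rational linear forms would force a rational solution of $a^2+b^2=0$). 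Thus each $q_{u,v}$ is prime in $\Q[X]$. Because $q_{u,v}$ involves precisely the variables of agents $u$ and $v$, distinct edges yield primes with distinct variable supports, which are pairwise non-associate. Consequently, for any nonempty $F \subseteq E$, the product $\prod_{uv \in F} q_{u,v}$ is a product of distinct primes each to the first power; by unique factorization it is square-free, hence not a square in $\Q[X]$ and therefore not a square in the fraction field $\K$. Combined with the previous paragraph, this yields $[\L(E):\K]=2^m$ with the natural family as a basis.

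The main obstacle is exactly this last square-freeness step: everything hinges on the irreducibility of the distance forms $q_{u,v}$ and on their being pairwise non-associate, with the only delicate point being $d=2$, where irreducibility of $Y_1^2+Y_2^2$ genuinely uses that $-1$ is not a square in $\Q$ (this is also why the hypothesis $d \ge 2$ is essential, since for $d=1$ the radicand is already a square and the extension collapses). By contrast, the field-theoretic packaging—closure under products, the finite-domain-is-a-field step, and the multiquadratic degree formula—is routine once this number-theoretic input is secured.
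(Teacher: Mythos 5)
Your proposal is correct, and it reaches the same field-theoretic core as the paper (a multiquadratic extension of degree $2^m$ over $\K$, established by induction on the number of square-root generators), but it is organized around a different key lemma and supplies a verification that the paper leaves implicit. The paper isolates a general statement (its Lemma~5) about functions $R_1,\dots,R_m$ satisfying three hypotheses --- $R_i^2\in\K$, $R_i^2\notin\K^{(2)}$, and $\prod_{i\in I}R_i\notin\K$ for every nonempty $I$ --- proves it by an induction that constructs inverses explicitly and derives contradictions from membership in $\L_k^{(2)}$, and then asserts in one line that the distance functions satisfy the hypotheses for $d\ge 2$. You instead reduce everything to the standard Kummer-type criterion that no product $\prod_{uv\in F}q_{u,v}$ over a nonempty $F$ is a square in $\K$, and you prove that criterion by a unique-factorization argument: each $q_{u,v}$ is irreducible in $\Q[X_1^{(1)},\dots,X_n^{(d)}]$ (rank $\ge 3$ forms are irreducible; for $d=2$ irreducibility of $Y_1^2+Y_2^2$ uses $-1\notin(\Q^\times)^2$), distinct edges give non-associate primes because their variable supports differ, so the product is squarefree. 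This is precisely the arithmetic input that the paper's verification step (and, in fact, the parenthetical claim inside its induction that the derived family $R_1,\dots,R_{k-1},R_kR_{k+1}$ again satisfies the hypotheses) tacitly relies on, so your route buys a more complete and checkable argument; what it costs is a slightly heavier reliance on standard theory (multiquadratic degree formula, finite-dimensional integral domains over a field are fields), where the paper's induction is self-contained. The only point I would tighten is the claim that $\L(E)$ is ``a subring of a field'': to make that precise you should either pass to germs of analytic functions at a generic point, or, more cleanly, observe that the abstract extension $\K[T_{uv}:uv\in E]/(T_{uv}^2-q_{u,v})$ is a field once your independence criterion holds, so the evaluation homomorphism onto $\L(E)$ sending $T_{uv}\mapsto D_{u,v}$ is injective, which yields the dimension count and the basis claim without any appeal to the integral-domain step.
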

	
	Lemma~\ref{lem: Field extension} involves several elements from field theory. They will not be discussed here, nonetheless, these concepts can be found \eg{} in \cite{roman2005field} and the proof of Lemma~\ref{lem: Field extension} is provided in Appendix~\ref{ap: Proof of algebraic lemmas} for the sake of completeness.
	Solely the implications of this lemma are explained here. The first important point is that $\L$ is a field. Therefore, every nonzero element has a multiplicative inverse. Second, $\L$ is a $\K$-vector space of dimension $2^m$ and one natural basis is known. For example, if $E = \set{ab, bc, ac}$, the natural basis of $\L$ has $8$ elements: the constant function equals to $1$, the three distance functions $D_{a,b}$, $D_{b,c}$ and $D_{a,c}$, the three products of two distance functions $D_{a,b}D_{b,c}$, $D_{b,c}D_{a,c}$ and $D_{a,c}D_{a,b}$, and the product of the three distance functions $D_{a,b}D_{b,c}D_{a,c}$.
	The last important consequence of Lemma~\ref{lem: Field extension} is that the polynomials $P_F$ involved in \eqref{eq: Definition of L} are unique as they are the coordinates on the natural basis.
	
	By definition, the only polynomial with integer coefficents that vanishes at a generic point of $\R^{nd}$ is the zero polynomial. The structure of $\L$ allows to extend this property to the field $\L$ as explained in the following lemma.
	\begin{lemma}\label{lem: Extension generic nullity}
		Let $E \subseteq \{uv \mid 1 \le u < v \le n\}$ be a set of edges and $f \in \L(E)$.
		If $\exists \myvector{x} \in \R^{nd} $ a generic vector such that $f(\myvector{x}) = 0$, then $f = 0$.
	\end{lemma}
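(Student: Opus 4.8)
The plan is to prove the statement by induction on $m = \abs{E}$, the idea being to multiply $f$ by a suitable conjugate in order to eliminate one square root at a time, reducing the number of distance functions until none remain and genericity can be applied directly to an ordinary rational function. First I would record the key feature of a generic point for this argument: since $\myvector{x}$ is generic, every nonzero polynomial with rational coefficients in the $nd$ coordinates is nonzero at $\myvector{x}$. Consequently, for any fixed element of $\L(E)$, all denominators of the rational functions $P_F \in \K$ are nonzero at $\myvector{x}$, and each $D_{u,v}(\myvector{x}) > 0$ because $D_{u,v}^2$ is a nonzero polynomial. Hence evaluation at $\myvector{x}$ is well-defined on all of $\L(E)$ and is a ring homomorphism, so in particular it respects products.

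The base case $m = 0$ is immediate: here $\L(\emptyset) = \K$, so $f = P/Q$ is a rational function, $f(\myvector{x}) = 0$ forces $P(\myvector{x}) = 0$, and genericity forces $P = 0$, whence $f = 0$. For the inductive step I would fix an edge $e = ab \in E$ and set $E' = E \setminus \set{e}$. Lemma~\ref{lem: Field extension} gives the natural basis of $\L(E)$, and splitting it according to whether $F$ contains $e$ yields the direct-sum decomposition $\L(E) = \L(E') \oplus D_{a,b}\,\L(E')$ as $\K$-vector spaces, so that $f$ can be written uniquely as $f = g + h\,D_{a,b}$ with $g, h \in \L(E')$. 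Introducing the conjugate $\bar f \triangleq g - h\,D_{a,b}$ (the image of $f$ under the order-$2$ $\L(E')$-automorphism flipping $D_{a,b}$, which exists since $[\L(E):\L(E')] = 2$ by Lemma~\ref{lem: Field extension}), the computation gives
\[
	f\,\bar f = g^2 - h^2\, D_{a,b}^2 = g^2 - h^2\,\Delta_{ab} \in \L(E'),
\]
where $\Delta_{ab} = \sum_{i=1}^d \left(X_a^{(i)} - X_b^{(i)}\right)^2 \in \K$. Evaluating at $\myvector{x}$ and using multiplicativity of evaluation gives $(f\bar f)(\myvector{x}) = f(\myvector{x})\,\bar f(\myvector{x}) = 0$, so the induction hypothesis applied to $f\bar f \in \L(E')$ forces $f\bar f = 0$. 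Since $\L(E)$ is a field (Lemma~\ref{lem: Field extension}) and hence an integral domain, either $f = 0$ or $\bar f = 0$; and if $\bar f = 0$, then uniqueness of the natural-basis coordinates gives $g = h = 0$, so $f = 0$ in this case as well.

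The argument is short, so I do not expect a serious obstacle; the points to watch are bookkeeping. The first is verifying that evaluation at a generic $\myvector{x}$ is legitimately a homomorphism on $\L(E)$ (so that $(f\bar f)(\myvector{x}) = f(\myvector{x})\bar f(\myvector{x})$ is valid even though $\bar f(\myvector{x})$ is some unknown finite value), which is exactly what the genericity discussion in the first paragraph secures. The second is confirming that $f\bar f$ really lies in the smaller field $\L(E')$ rather than merely in $\L(E)$, so that the induction hypothesis is applicable; this is guaranteed by the direct-sum decomposition and the fact that $D_{a,b}^2 \in \K \subseteq \L(E')$. Both rest directly on the field-extension structure established in Lemma~\ref{lem: Field extension}.
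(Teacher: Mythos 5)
Your proof is correct and follows essentially the same route as the paper's: induction on the number of edges, the conjugate $\bar f = g - hD_{a,b}$, the observation that $f\bar f = g^2 - h^2 D_{a,b}^2$ drops into $\L(E')$, and the field structure from Lemma~\ref{lem: Field extension} to conclude $f = 0$ or $\bar f = 0$. Your explicit justification that evaluation at a generic point is a well-defined ring homomorphism is a detail the paper leaves implicit, but it changes nothing of substance.
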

	\begin{proof}
		Let $\myvector{x}$ be a generic vector and $m = \abs{E}$. For every $E' \subseteq E$, by Lemma~\ref{lem: Field extension}, $\L(E')$ is a field and a $\K$-vector space of dimension $2^{\abs{E'}}$ whose natural basis is composed of the products between the distance functions.
		
		Let us prove the lemma by induction on the number of distance functions appearing in the expression of $f$. Let us prove that:
		$\forall k \in \set{0, \dots, m}$, if $f \in \L(E')$ with $\abs{E'} = k$ and if $f(\myvector{x}) = 0$, then $f = 0$.
		
		\emph{Base case:} If $k = 0$, $f$ is a rational function with integer coefficients, \ie{} $f = P/Q$ with $P$ and $Q$ two polynomials with integer coefficients. By definition, since $\myvector{x}$ is generic and $P(\myvector{x}) = 0$, $P$ is the null function and therefore $f = 0$.
		
		\emph{Inductive step:} Let $E'$ have cardinality $k+1$ with $k \ge 0$, $f \in \L(E')$ with $f(\myvector{x}) = 0$, and $uv \in E'$. Any function $h \in \L(E')$ can be uniquely decomposed, by separating the natural basis of $\L(E')$, as $h_1 + D_{u,v} h_2$ with $h_1, h_2 \in \L(E'\setminus{\set{uv}})$.
		Let $f = f_1 + D_{u,v} f_2$ be this decomposition applied to $f$. Furthermore, let $\bar f = f_1 - D_{u,v} f_2$ and $g = f\bar f = f_1^2 - D_{u,v}^2 f_2^2$. As $D_{u,v}^2$ is a polynomial, $g \in \L(E'\setminus{\set{uv}})$ where $E'\setminus{\set{uv}}$ has cardinality $k$. Since $f$ vanishes at $\myvector{x}$, $g$ also vanishes at $\myvector{x}$ and by the induction hypothesis, $g = 0$.
		Therefore, since $\L(E')$ is a field, either $f = 0$ or $\bar f = 0$. By definition, $f$ and $\bar f$ have the same coordinates up to a sign in the natural basis, thus $f = 0$.
	\end{proof}
	
	As the $nd$ spatial coordinates of a generic framework form a generic point of $\R^{nd}$, Lemma~\ref{lem: Extension generic nullity} implies that the rank of the pseudorange rigidity matrix is a generic property of its graph $\Gamma$ and of its underlying undirected graph $\tilde\Gamma$. We prove in the sequel a stronger result: the rank can be expressed using \emph{decompositions} of $\tilde \Gamma$.
	\begin{definition}
			Let $\tilde \Gamma = (V, \tilde E)$ be the underlying undirected graph of a directed graph $\Gamma = (V, E)$. Denote $E_1$ the set of edges that appear once in $\tilde E$ and $E_2$ the set of edges that appear twice in $\tilde E$.
			Two simple graphs $G_D = (V, E_D)$ and $G_S = (V, E_S)$ are said to form a \emph{decomposition} of the multigraph $\tilde \Gamma = (V, \tilde E)$ if:
			\begin{enumerate}
				\item $E_D \cup E_S = E_1 \cup E_2$.
				\item $E_D \cap E_S = E_2$.
			\end{enumerate}
			A decomposition $(G_D, G_S)$ of $\tilde \Gamma$ is denoted as $\tilde \Gamma = G_D\cup G_S$.
	\end{definition}
	In other words, a decomposition of an undirected multigraph is a splitting of its edges into two simple graphs. Of course, an undirected multigraph often admits more than one decomposition. The subscript \guillemets{$D$} has been chosen as $G_D$ will be searched as a distance rigid graph. The subscript \guillemets{$S$} has been chosen as $G_S$ will be searched as a connected graph in order to synchronize the clocks.
	
	The rank of a distance rigidity matrix is a generic property of its graph \cite{asimow1978rigidity}, let us denote this generic rank simply as $\rank \mymatrix{R}_D(G)$. Furthermore from \eqref{eq: Decomposition B}, the rank of $\mymatrix{R}_S(G, \myvector{p})$ is also a generic property of the graph, it is equal to the rank of the incidence matrix of $G$. Let us denote this generic rank similarly as $\rank \mymatrix{R}_S(G)$.
	With these notations, we are now in a position to state our main result.
	
	\begin{theorem}\label{the: Generic rank}
		Let $(\Gamma, \myvector{p})$ be a generic pseudorange framework whose underlying undirected multigraph is $\tilde \Gamma$ and denote $r = \rank \mymatrix{R}_P(\Gamma, \myvector{p})$. Then:
		\begin{equation}
			r = \max_{\tilde \Gamma = G_D \cup G_S} \rank \mymatrix{R}_D(G_D) + \rank \mymatrix{R}_S(G_S)
		\end{equation}
		where the maximum is taken over the decompositions of $\tilde\Gamma$.
	\end{theorem}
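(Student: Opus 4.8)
The plan is to first replace the generic evaluation by an algebraic rank. By Lemma~\ref{lem: Extension generic nullity}, every minor of $\mymatrix{R}_P(\Gamma,\myvector{p})$ is an element of $\L(\tilde E)$ that vanishes at the generic point $\myvector{p}$ if and only if it is the zero function; hence $r$ equals the largest size of a square submatrix whose determinant is a \emph{nonzero} element of $\L$. It therefore suffices to identify this algebraic rank with the right-hand side. I work throughout with the block form $\mymatrix{R}_P = [\,\mymatrix{R}_D \mid \mymatrix{R}_S\,]$, where $\mymatrix{R}_S = \mymatrix{D}(\Gamma,\myvector{p})\mymatrix{B}(\Gamma)^\intercal$, and with the elementary observation that the two arcs of an edge of $E_2$ produce \emph{identical} rows in $\mymatrix{R}_D$ and \emph{opposite} rows in $\mymatrix{R}_S$.

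Fix a square submatrix on an arc set $R$ and columns $C = C_D \cup C_S$ (spatial columns $C_D$, clock columns $C_S$), and Laplace-expand its determinant along the clock columns:
\begin{equation*}
\det = \sum_{S} \pm\,\det\mymatrix{R}_S[S,C_S]\;\det\mymatrix{R}_D[R\setminus S,C_D],
\end{equation*}
the sum ranging over $|C_S|$-subsets $S\subseteq R$. Since $\mymatrix{R}_S[S,C_S] = \mymatrix{D}_S\,\mymatrix{B}[S,C_S]^\intercal$ and incidence matrices are totally unimodular, each clock factor equals $\pm\prod_{uv\in\mathrm{edges}(S)} D_{u,v}$, and it vanishes whenever $S$ contains both arcs over one edge (opposite clock rows); each distance factor lies in $\K$ and vanishes whenever $R\setminus S$ contains both arcs over one edge (identical spatial rows). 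Thus in every nonzero term each edge is used at most once on each side, so $|\mathrm{edges}(S)|=|S|$ and $|\mathrm{edges}(R\setminus S)|=|R\setminus S|$.

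\emph{The inequality $r\le\max$.} If some minor is a nonzero element of $\L$, then at least one Laplace term is nonzero, giving an $S$ for which $\mathrm{edges}(R\setminus S)$ is independent in the rigidity matroid and $\mathrm{edges}(S)$ is a forest, any edge common to both lying in $E_2$. Completing these two edge sets to a full decomposition $\tilde\Gamma = G_D\cup G_S$ (all of $E_2$ to both sides, the remaining singly covered edges to either side) and using monotonicity of $\rank\mymatrix{R}_D$ and $\rank\mymatrix{R}_S$ in the edge set yields $\rank\mymatrix{R}_D(G_D)+\rank\mymatrix{R}_S(G_S)\ge|R|=r$.

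\emph{The inequality $r\ge\max$.} Fix a decomposition attaining the maximum, set $a=\rank\mymatrix{R}_D(G_D)$, $b=\rank\mymatrix{R}_S(G_S)$, and choose witnesses: a rigidity-independent $E_D'\subseteq E_D$ of size $a$ with a nonvanishing polynomial minor on some columns $C_D$, and a spanning forest $E_S'\subseteq E_S$ of size $b$ whose reduced incidence minor on a suitable vertex set $C_S$ is $\pm1$. Realize each witness edge by an arc, using distinct arcs on the two sides for the shared edges $E_D'\cap E_S'\subseteq E_2$, which carry two arcs each. The main obstacle is that the resulting $(a+b)\times(a+b)$ minor is not block triangular, so its Laplace terms could a priori cancel; this is precisely where the doubled-arc structure is used. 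I first apply the invertible row operation replacing the two rows of each edge of $E_D'\cap E_S'$ by one pure spatial row and one pure clock row, which scales the determinant by a nonzero constant. In the transformed minor the pure rows are forced into their own blocks, and since the remaining mixed rows sit over pairwise distinct edges, the natural-basis monomial $\prod_{uv\in E_S'}D_{u,v}$ (Lemma~\ref{lem: Field extension}) receives \emph{exactly one} Laplace contribution, equal to $\pm1$ times the chosen nonzero spatial and clock minors. By uniqueness of the natural-basis coordinates this component cannot be cancelled, so the minor is a nonzero element of $\L$ and $r\ge a+b$. Combining the two inequalities proves the theorem.
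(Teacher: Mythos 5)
Your proof is correct and follows essentially the same route as the paper's: reduce the generic rank to an algebraic rank via Lemma~\ref{lem: Extension generic nullity}, Laplace-expand minors of the block matrix $[\,\mymatrix{R}_D \mid \mymatrix{R}_S\,]$ along the clock columns, identify each clock factor with $\pm\prod_{uv}D_{u,v}$ over an acyclic arc set, read off nonvanishing from the natural-basis coordinates of Lemma~\ref{lem: Field extension}, and, for the upper bound, complete the two witness edge sets of a nonvanishing Laplace term to a decomposition of $\tilde\Gamma$ exactly as the paper does. The one point where you go beyond the paper is the symmetrization row operation on doubled arcs in the lower-bound direction: the paper asserts that the coefficient of the basis monomial $\prod_{uv\in F_S}D_{u,v}$ is $\pm\det\mymatrix{N}_D(F_D)$, overlooking that all $2^{|F_D\cap F_S|}$ arc subsets with the same underlying edge set contribute to that coordinate; these contributions in fact add constructively (the sign flip from negating a clock row cancels the sign flip from permuting the two identical spatial rows), so the paper's conclusion stands, but your row reduction makes the uniqueness of the surviving contribution immediate and is the cleaner way to close that step.
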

	\begin{proof}
		First, consider a decomposition $(G_D, G_S)$ of $\tilde \Gamma$ and let us prove that $r \ge \rank \mymatrix{R}_D(G_D) + \rank \mymatrix{R}_S(G_S)$. Denote $r_D = \rank \mymatrix{R}_D(G_D)$ and $r_S = \rank \mymatrix{R}_S(G_S)$. There exist two square submatrices of $\mymatrix{R}_D(G_D, \myvector{p})$ and $\mymatrix{R}_S(G_S, \myvector{p})$ of size $r_D$ and $r_S$ with non-null determinants. Let $F_D$ and $F_S$ denote the edge sets associated with their rows, and $C_D$ and $C_S$ the indices of their columns. Then consider $\mymatrix{N}(\Gamma ,\myvector{p})$ the square submatrix of $\mymatrix{R}_P(\Gamma,\myvector{p})$ of size $r_D + r_S$ associated with the columns $C_D \cup C_S$ and the rows $E = F_D \cup F_S$. As in \eqref{eq: Decomposition M}, write $\mymatrix{N}(\Gamma,\myvector{p}) = \begin{bmatrix}
			\mymatrix{N}_D(\Gamma,\myvector{p}) & \mymatrix{N}_S(\Gamma,\myvector{p})
		\end{bmatrix}$.
		By employing the Laplace expansion theorem, see \eg{} \cite[Section~0.8.9]{horn2012matrix}, on the determinant of $\mymatrix{N}(\Gamma, \myvector{p})$:
		\begin{equation}\label{eq: Decomposition N}
			\det \mymatrix{N}(\Gamma) = \sum_{F\subseteq E: \abs{F} = r_S} \pm \det \mymatrix{N}_D(E\setminus F) \det \mymatrix{N}_S(F),
		\end{equation}
		where the sum is over the subsets of $E$ of cardinality $r_S$, $\det \mymatrix{N}(\Gamma)$ denotes the function $p\mapsto \det \mymatrix{N}(\Gamma,\myvector{p})$, $\mymatrix{N}_X(F)$ denotes the submatrix of $\mymatrix{N}_X(\Gamma)$ induced by the rows of $F$, and the \guillemets{$\pm$} depend on the signs of the permutations in the expansion. 
		From \eqref{eq: Decomposition B}, the rows of $\mymatrix{R}_S(F,p)$ are proportional to the columns of the incidence matrix induced by $F$. Therefore if there is a cycle composed of edges in $F$ and vertices in $U$, where $U$ denotes the set of vertices associated with the columns $C_S$, then $\det \mymatrix{N}_S(F)$ is the null function. Furthermore, if there is no such cycle, $\det \mymatrix{N}_S(F) = \pm \prod_{uv \in F} D_{u,v}$. Thus, by denoting $\rF$ the set of $F \subseteq E$ having cardinality $r_S$ without cycle in $U$:
		\begin{equation}\label{eq: Decomposition N 2}
			\det \mymatrix{N}(\Gamma) = \sum_{F\in \rF} P_F
			\prod_{uv \in F} D_{u,v} \in \L(E).
		\end{equation}
		where $P_F = \pm \det \mymatrix{N}_D(E\setminus F) \in \K$. Equation \eqref{eq: Decomposition N 2} is the decomposition of $\det \mymatrix{N}(\Gamma)$ in the natural basis of $\L(E)$. Since $\det \mymatrix{N}_D(E\setminus F_S, \myvector{p}) = \det \mymatrix{N}_D( F_D, \myvector{p}) \ne 0$, the coefficient associated with $F_S$ is not null and $\det \mymatrix{N}(\Gamma)\neq 0$. Therefore, as $\myvector{p}$ is generic according to Lemma~\ref{lem: Extension generic nullity}, $\det \mymatrix{N}(\Gamma,\myvector{p})\neq 0$ and $r \ge r_D+ r_S$. Thus:
		\begin{equation*}
			r \ge \max_{G_D \cup G_S = \tilde \Gamma} \rank \mymatrix{R}_D(G_D) + \rank \mymatrix{R}_S(G_S).
		\end{equation*}
		
		Conversely, consider a submatrix $\mymatrix{N}(\Gamma, \myvector{p})$ of $\mymatrix{R}_P(\Gamma, \myvector{p})$ of size $r$ with a non-null determinant. Denote as $F$ its rows. Write similarly $\mymatrix{N}(\Gamma,\myvector{p}) = \begin{bmatrix}
			\mymatrix{N}_D(\Gamma,\myvector{p}) & \mymatrix{N}_S(\Gamma,\myvector{p})
		\end{bmatrix}$. With the same notations as in \eqref{eq: Decomposition N}, since $\det \mymatrix{N}(\Gamma) \neq 0$, there exists $F_S \subseteq F$ such that $\mymatrix{N}_D(F\setminus F_S) \neq 0$ and $\det \mymatrix{N}_S(F_S) \neq 0$. Let $F_D = F \setminus F_S$, $r_D = \abs{F_D}$ and $r_S = \abs{F_S}$. Finally, set $E_S = F_S \cup E_2$ and $E_D = E_2 \cup (E_1 \setminus F_S)$ where $E_1$ and $E_2$ denote the edge sets of the single and double edges of $\tilde \Gamma$. We can verify that $G_D = (V, E_D)$ and $G_S = (V, E_S)$ form a decomposition of $\tilde\Gamma$. Furthermore, by construction, $\rank \mymatrix{R}_D(G_D) \ge r_D$ and $\rank \mymatrix{R}_S(G_S) \ge r_S$. Thus:
		\begin{equation*}
			\rank \mymatrix{R}_D(G_D) + \rank \mymatrix{R}_S(G_S) \ge r_D + r_S = r,
		\end{equation*}
		concluding the proof of Theorem~\ref{the: Generic rank}.
		
	\end{proof}
	Theorem~\ref{the: Generic rank} implies that two generic pseudorange frameworks having the same underlying undirected graph have rigidity matrices of the same rank. Consequently, from the definition of infinitesimal rigidity, Theorem~\ref{the: Generic rank} has the following corollary.
	\begin{corollary}\label{cor: Generic rigidity}
		Let $\tilde \Gamma$ be an undirected pseudorange graph. Either every generic $d$-dimensional pseudorange framework whose underlying undirected pseudorange graph is $\tilde \Gamma$ is rigid or none of them is. In this former case, $\tilde \Gamma$ is said to be rigid in $\R^d$.
	\end{corollary}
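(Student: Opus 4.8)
The plan is to read the corollary off directly from Theorem~\ref{the: Generic rank}, which already does the heavy lifting. The one observation needed is that the right-hand side of the rank formula in Theorem~\ref{the: Generic rank} depends only on the underlying undirected multigraph $\tilde\Gamma$, and neither on the chosen orientation $\Gamma$ nor on the particular generic configuration $\myvector{p}$. Indeed, the set of decompositions $\tilde\Gamma = G_D \cup G_S$ is defined purely from the single-edge set $E_1$ and the double-edge set $E_2$ of $\tilde\Gamma$; moreover $\rank \mymatrix{R}_D(G_D)$ is the generic rank of the distance rigidity matrix of $G_D$, a graph invariant by \cite{asimow1978rigidity}, and $\rank \mymatrix{R}_S(G_S)$ is the rank of the incidence matrix of $G_S$. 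Hence the maximum is a function of $\tilde\Gamma$ alone.

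First I would fix $\tilde\Gamma$ and let $(\Gamma, \myvector{p})$ range over all generic pseudorange frameworks whose underlying undirected multigraph is $\tilde\Gamma$, so that $\Gamma$ is any orientation compatible with $\tilde\Gamma$. By Theorem~\ref{the: Generic rank}, every such framework satisfies
\begin{equation*}
	\rank \mymatrix{R}_P(\Gamma, \myvector{p}) = \max_{\tilde\Gamma = G_D \cup G_S} \rank \mymatrix{R}_D(G_D) + \rank \mymatrix{R}_S(G_S),
\end{equation*}
and by the observation above the right-hand side is a single common value $r(\tilde\Gamma)$ shared by all of them.

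Next I would invoke the definition of infinitesimal rigidity (Definition~\ref{def: Generic configuration}): $(\Gamma, \myvector{p})$ is infinitesimally rigid if and only if $\rank \mymatrix{R}_P(\Gamma, \myvector{p}) = S_P(n,d)$. Since $S_P(n,d)$ depends only on $n$ and $d$, which are common to all frameworks under consideration, the equality $r(\tilde\Gamma) = S_P(n,d)$ either holds for all of them or for none. Thus infinitesimal rigidity is an all-or-nothing property of $\tilde\Gamma$. Finally, Lemma~\ref{lem: Equivalence rigidities} states that for generic frameworks rigidity and infinitesimal rigidity coincide, so the same dichotomy transfers to rigidity, which is exactly the claim; one may therefore declare $\tilde\Gamma$ rigid in $\R^d$ precisely when $r(\tilde\Gamma) = S_P(n,d)$.

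Since Theorem~\ref{the: Generic rank} may be assumed, there is essentially no obstacle in this argument: the mathematical content lives entirely in the theorem, and the corollary is a bookkeeping deduction. The only point requiring a moment's care is the remark that $r(\tilde\Gamma)$ is independent of orientation and configuration, which is immediate once one notes that neither the indexing set of decompositions nor either of the two generic ranks appearing in the formula sees the arc directions or the spatial coordinates.
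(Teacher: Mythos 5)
Your proof is correct and follows the same route as the paper: the paper likewise reads the corollary off Theorem~\ref{the: Generic rank} by noting that the rank formula depends only on $\tilde\Gamma$, then combines this with the definition of infinitesimal rigidity and Lemma~\ref{lem: Equivalence rigidities}. You have merely spelled out the orientation- and configuration-independence of the right-hand side in more detail than the paper does.
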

	This corollary is illustrated by the pseudorange frameworks of Fig.~\ref{fig: Pseudorange frameworks}. The frameworks $(\Gamma_1, p_1)$ and $(\Gamma_2, p_2)$ have the same underlying undirected graph, while their graphs are both flexible. Note that however the admissible deformations are different. Similarly, the frameworks $(\Gamma_3, p_3)$ and $(\Gamma_4, p_4)$ have the same graph and are both rigid.
	
	The second main consequence of Theorem~\ref{the: Generic rank} is the characterization of the rigidity of the underlying undirected graph.
	\begin{corollary}\label{cor: Decomposition}
		Let $\tilde \Gamma$ be an undirected pseudorange graph. $\tilde \Gamma$ is rigid in $\R^d$ if and only if there exists a decomposition $(G_D, G_S)$ of $\tilde \Gamma$ such that $G_D$ is distance rigid in $\R^d$ and $G_S$ is connected.
	\end{corollary}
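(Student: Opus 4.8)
The plan is to read off both implications directly from the rank formula of Theorem~\ref{the: Generic rank}, after translating the qualitative notion of rigidity into a single numerical condition. First I would recall that, by Lemma~\ref{lem: Equivalence rigidities} together with Corollary~\ref{cor: Generic rigidity}, rigidity of $\tilde\Gamma$ in $\R^d$ is equivalent to generic infinitesimal rigidity, which by Definition~\ref{def: Generic configuration} means precisely that the generic rank $r = \rank \mymatrix{R}_P(\Gamma, \myvector{p})$ satisfies $r = S_P(n,d) = S_D(n,d) + n - 1$. Thus the whole statement reduces to showing that the maximum appearing in Theorem~\ref{the: Generic rank} equals $S_D(n,d) + n - 1$ if and only if some decomposition has $G_D$ distance rigid and $G_S$ connected.

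The second step is to bound each summand by its own maximum. For the distance block, $\rank \mymatrix{R}_D(G_D) \le S_D(n,d)$ for every graph $G_D$ on $n$ vertices, with equality holding exactly when $G_D$ is distance rigid in $\R^d$; this is the very definition of generic distance rigidity. For the synchronization block, $\mymatrix{R}_S(G_S)$ has the same rank as the incidence matrix of $G_S$, namely $n$ minus the number of connected components of $G_S$; hence $\rank \mymatrix{R}_S(G_S) \le n-1$, with equality exactly when $G_S$ is connected. Adding the two bounds gives, for every decomposition, $\rank \mymatrix{R}_D(G_D) + \rank \mymatrix{R}_S(G_S) \le S_D(n,d) + n - 1 = S_P(n,d)$, which in particular re-derives \eqref{eq: Bound rank}.

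The final step assembles these into the equivalence. For the backward direction, if a decomposition $(G_D, G_S)$ has $G_D$ distance rigid and $G_S$ connected, then that single decomposition already realizes $S_D(n,d) + (n-1) = S_P(n,d)$, so the maximum $r$ equals $S_P(n,d)$ and $\tilde\Gamma$ is rigid. For the forward direction, if $\tilde\Gamma$ is rigid then $r = S_P(n,d)$, so the maximum is attained by some decomposition $(G_D, G_S)$ with $\rank \mymatrix{R}_D(G_D) + \rank \mymatrix{R}_S(G_S) = S_D(n,d) + (n-1)$; since each term is separately capped at $S_D(n,d)$ and $n-1$, both caps must be met simultaneously, forcing $G_D$ distance rigid and $G_S$ connected.

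I expect no deep obstacle here: essentially all the content is carried by Theorem~\ref{the: Generic rank}, and the only point demanding care is the simultaneous-maximization argument, namely that attaining the sum of the two individual upper bounds forces each summand to be individually maximal. This is immediate once the two bounds are in place, but it crucially relies on the additive decoupling $S_P(n,d) = S_D(n,d) + (n-1)$, which is exactly the separation of the spatial and clock variables that Theorem~\ref{the: Generic rank} provides.
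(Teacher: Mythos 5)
Your proposal is correct and follows essentially the same route as the paper's proof: reduce rigidity to the generic rank condition $r = S_P(n,d) = S_D(n,d) + n - 1$, bound each block of the decomposition by $S_D(n,d)$ and $n-1$ respectively with equality characterized by distance rigidity and connectivity, and conclude via Theorem~\ref{the: Generic rank} that the sum attains $S_P(n,d)$ if and only if both bounds are met simultaneously. No gaps.
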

	\begin{proof}
		Let $(\Gamma, \myvector{p})$ be a generic pseudorange framework having $\tilde\Gamma$ for underlying undirected pseudorange graph. $\tilde\Gamma$ is rigid in $\R^d$ if and only if $\rank \mymatrix{R}_P(\Gamma, \myvector{p}) = S_P(n,d)$. For any graph $G_D$, $\rank \mymatrix{R}_D(G_D) \le S_D(n,d)$ with equality if and only if $G_D$ is distance rigid in $\R^d$. Similarly, for any graph $G_S$, $\rank \mymatrix{R}_S(G_S) = \rank \mymatrix{B}(G_S) \le n -1$ with equality if and only if $G_S$ is connected.
		As by definition $S_P(n,d) = S_D(n,d) + n -1$, according to Theorem~\ref{the: Generic rank}, $\rank \mymatrix{R}_P(\Gamma, \myvector{p}) = S_P(n,d)$ if and only if there exists a decomposition $(G_D, G_S)$ that achieves both equalities.
	\end{proof}
	
	
	Corollary~\ref{cor: Decomposition} gives an interpretation to the rigidity of pseudorange frameworks. To be rigid a pseudorange graph should have a decomposition into a distance rigid graph and a connected graph. The distance rigid graph sets the positions of the agents while the connected graph synchronizes their clocks.
	This decomposition may be viewed as a decoupling of the space and clock variables.
	From a combinatorial point of view, Theorem~\ref{the: Generic rank} can be stated using \emph{matroid} theory, see \eg{} \cite{oxley2006matroid}. The pseudorange rigidity matroid is defined on the edges of the directed graph $\Gamma$. Its independent sets are the sets of edges that generate independent rows in generic rigidity matrices. Theorem~\ref{the: Generic rank} states that the pseudorange rigidity matroid is the union of the distance rigidity matroid with the cycle matroid.

	Testing pseudorange rigidity requires testing distance rigidity. In 2D, there exist efficient algorithms, \ie{} that run in polynomial time, to verify the rigidity of graphs, see \eg{} \cite{jacobs1997algorithm}. However, the distance rigidity matroid is not characterized in 3D and no deterministic algorithm can be employed. In \cite{gortler2010characterizing}, the authors proposed an efficient alternative by employing randomized rigidity tests. The idea is to infer infinitesimal rigidity by computing the rank of a randomly generated configuration with (large) integer coordinates. This test will never return a false positive and the probability to return a false negative is bounded. By repeating the test several times, the probability of error can be reduced to an acceptable level.
	
	An extension of this work could focus on the global rigidity of pseudorange frameworks. As distance rigidity, global distance rigidity is a generic property of the graphs \cite{connelly2005generic, gortler2010characterizing}. Global pseudorange rigidity is a generic property neither of the underlying undirected graph nor of the directed graph. Indeed, the two rigid frameworks in Figures~\ref{sfig: Rigid framework} and \ref{sfig: Rigid framework 2} form a counterexample: they have the same graph but one is globally rigid and the other is not. This example is an adaptation of the single-receiver single-constellation problem with $4$ satellites and was highlighted in \cite{abel1991existence} (and \cite{schmidt1972new} for the LORAN system). Future work will study global pseudorange rigidity. In particular, we can adapt Corollary~\ref{cor: Decomposition} to make the following conjecture.
	\begin{conjecture}
		Let $(\Gamma, p)$ be a generic pseudorange framework whose underlying undirected pseudorange graph is $\tilde \Gamma$. If there exists a decomposition $(G_D, G_S)$ of $\tilde \Gamma$ such that $G_D$ is globally rigid in $\R^d$ and $G_S$ is connected, then $(\Gamma, p)$ is globally rigid.
	\end{conjecture}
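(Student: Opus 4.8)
The plan is to reduce global pseudorange rigidity to global \emph{distance} rigidity of the underlying graph $\tilde\Gamma$, isolating the genuinely new phenomenon: the coupling between the clock biases and the spatial coordinates along the single arcs.

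\textbf{Reduction.} Let $(\Gamma,\myvector{p}')$ be equivalent to $(\Gamma,\myvector{p})$ and put $\delta_i = \beta_i' - \beta_i$. Subtracting \eqref{eq: Pseudo-range constraint} evaluated at $\myvector{p}$ and at $\myvector{p}'$, every arc $uv\in E$ gives
\[
\norm{\myvector{x}_u' - \myvector{x}_v'} = \norm{\myvector{x}_u - \myvector{x}_v} + \delta_u - \delta_v .
\]
Hence $(\Gamma,\myvector{p}')$ is congruent to $(\Gamma,\myvector{p})$ if and only if $\delta$ is constant, and the whole statement reduces to the claim that any solution $(\myvector{x}',\delta)$ of this \emph{coordinated} system has $\delta$ constant. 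Indeed, if $\delta$ is constant then all spatial distances carried by the arcs of $\Gamma$, hence all edges of $\tilde\Gamma$, are preserved; since $G_D\subseteq\tilde\Gamma$ is globally rigid and global distance rigidity can only be improved by adding edges, $\tilde\Gamma$ is globally rigid at the generic configuration $\myvector{x}$, so $\myvector{x}'$ is congruent to $\myvector{x}$, which together with constant $\delta$ yields congruence of the pseudorange frameworks.

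\textbf{The doubled arcs and the clean case.} For $uv\in E_2$ both arcs are present, so the symmetric relations \eqref{eq: Symmetrical constraints} force simultaneously $\norm{\myvector{x}_u'-\myvector{x}_v'}=\norm{\myvector{x}_u-\myvector{x}_v}$ and $\delta_u=\delta_v$. Thus $\delta$ is already constant on each component of $(V,E_2)$ and the lengths of $E_2$ are preserved. This settles at once the case $E_D=E_2$ (the doubled part alone is globally rigid): distances are preserved on all of $G_D$, global rigidity of $G_D$ makes $\myvector{x}'$ congruent to $\myvector{x}$, whence $\delta_u=\delta_v$ on every arc and, $G_S$ being connected, $\delta$ is globally constant. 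Here the two hypotheses combine exactly as intended --- $G_D$ pins the positions, $G_S$ synchronizes the biases.

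\textbf{The hard part.} The difficulty is the remaining case where $G_D$ must recruit single arcs, i.e. $E_1^D := E_D\setminus E_2 \neq\emptyset$. On these arcs the increment $g_{uv}:=\delta_u-\delta_v$ need not vanish, so $\myvector{x}'$ is only a \emph{coordinated} perturbation of $\myvector{x}$ on $G_D$ and the distance interpretation is lost. I would attack this through the stress-matrix certificate of generic global rigidity \cite{connelly2005generic,gortler2010characterizing}: $\myvector{x}$ carries an equilibrium stress $\omega$ on $E_D$ whose stress matrix $\Omega$ is positive semidefinite of maximal rank $n-d-1$ with kernel the affine span of $\myvector{x}$. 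Writing $E(\myvector{y})=\sum_{uv\in E_D}\omega_{uv}\norm{\myvector{y}_u-\myvector{y}_v}^2\ge 0$ and using $E(\myvector{x})=0$ together with $g\equiv 0$ on $E_2$ gives
\[
0 \le E(\myvector{x}') = 2\sum_{uv\in E_1^D}\omega_{uv}\,g_{uv}\,\norm{\myvector{x}_u-\myvector{x}_v} + \sum_{uv\in E_1^D}\omega_{uv}\,g_{uv}^2 ,
\]
the goal being to force $g\equiv 0$ on $E_1^D$, after which $G_D$ being connected propagates $\delta$ to a constant. The main obstacle is precisely the indefinite cross-term: the equilibrium stress of a globally rigid graph is not sign-definite on its edges, so positivity of $E(\myvector{x}')$ does not by itself eliminate the increments. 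Overcoming it seems to require coupling this spatial certificate with the synchronization carried by $G_S$, and with the fact that a pseudorange is not a squared distance, so that the quadratic stress machinery must be replaced by an analogue adapted to the field $\L$ of Lemma~\ref{lem: Field extension}. This coupling, compounded by the failure of global rigidity to be a generic property of the graph --- so that no purely combinatorial analogue of Theorem~\ref{the: Generic rank} is available --- is what keeps the statement at the level of a conjecture.
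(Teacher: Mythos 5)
The statement you were asked to prove is stated in the paper as a \emph{conjecture}: the authors give no proof and explicitly defer global pseudorange rigidity to future work, so there is no paper argument to compare yours against. Judged on its own terms, your submission is not a proof of the statement, and you say as much yourself. What you do establish is sound: the reduction of congruence to the single condition that $\delta_i=\beta_i'-\beta_i$ be constant is correct (your \guillemets{if and only if} is slightly overstated as written, since congruence also requires all pairwise distances to be preserved, but the very next sentence supplies exactly the missing implication via generic global rigidity of $G_D\subseteq\tilde\Gamma$); and the special case $E_D=E_2$, where the doubled arcs already pin both the lengths and the bias differences, is handled completely and correctly.

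The genuine gap is that the case you label \guillemets{the hard part} --- $E_D$ containing single arcs, on which the length of an edge in the rival configuration is $\norm{\myvector{x}_u-\myvector{x}_v}+g_{uv}$ with $g_{uv}=\delta_u-\delta_v$ unknown --- is the entire content of the conjecture, and your stress-matrix computation does not close it. The identity
$0\le E(\myvector{x}')=2\sum_{uv\in E_1^D}\omega_{uv}\,g_{uv}\norm{\myvector{x}_u-\myvector{x}_v}+\sum_{uv\in E_1^D}\omega_{uv}\,g_{uv}^2$
is correct, but, as you observe, the equilibrium stress of a generically globally rigid graph has no sign control on individual edges, so positivity of $E(\myvector{x}')$ does not force $g\equiv 0$; nothing in the argument yet uses the connectivity of $G_S$ to constrain the $g_{uv}$ on $E_1^D$, and the paper's own pair of frameworks $(\Gamma_3,p_3)$ and $(\Gamma_4,p_4)$ shows that any successful argument must use more than the underlying undirected graph. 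So the proposal should be read as a correct reduction plus a resolved special case plus a clearly identified obstruction --- a reasonable research plan, but not a proof, which is consistent with the statement remaining a conjecture in the paper.
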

	This conjecture is motivated by the algebraic methods proposed to solve the usual GNSS positioning problem \cite{bancroft1985algebraic, krause1987direct}. When employed with $4$ satellites, they provide two candidate solutions. One of these candidates may imply a nonsense, \eg{} a negative distance, and should be ruled out. We conjecture that the uniqueness of theses weak solutions is a generic property of the graph and that if we weaken the pseudorange constraint to:
	\begin{align}
		\rho_{u,v} &= \pm\norm{\myvector{x}'_u - \myvector{x}'_v} + \beta_{v}' - \beta_{u}',
	\end{align}
	then global rigidity becomes a generic property of the underlying undirected graph. 
	
	\section{GNSS rigidity for cooperative positioning}\label{sec: GNSS rigidity}
	
	\subsection{GNSS rigidity}
	
	This section presents the adaptation of pseudorange rigidity for cooperative GNSS.

	Consider a group of $S$ satellites belonging to $C$ different GNSS constellations and $R$ cooperative receivers. The receivers cooperate by measuring distances. Furthermore, the positions of the satellites are known, and the satellites belonging to the same constellation are synchronized. We focus on the rigidity of the framework formed by the receivers and the satellites. Each agent (satellite or receiver) is parameterized by $\myvector{p}_i = \begin{pmatrix} \myvector{x}_i^\intercal & \beta_i \end{pmatrix}^\intercal \in \R^{d+1}$. This framework has $3$ types of constraints:
		\begin{enumerate}
			\item Pseudorange constraints: from satellites to receivers. They are represented by a directed graph $\Gamma = (V, E_P)$.
			\item Distance constraints: between receivers and between satellites. The distance constraints between receivers are due to the distance measurements, while the distance constraints between satellites are due to the fact that their positions are known. As the positions of the satellites are known, so are their inter-distances. Distance constraints are represented by an undirected graph $G_D = (V, E_D)$.
			\item Synchronization constraints: between satellites. The satellites within a GNSS constellation are synchronized, therefore if two satellites $u$ and $v$ belong to the same GNSS constellation, then $\beta_u = \beta_v$. These constraints are represented in an undirected graph $G_S = (V, E_S)$.
		\end{enumerate}
		These three graphs of constraints are grouped into one graph that we call a GNSS graph $\rG = (\Gamma, G_D, G_S)$. Similarly to pseudorange graphs, we denote as $\tilde \rG = (\tilde\Gamma, G_D, G_S)$ the underlying undirected GNSS graph.
		Fig.~\ref{fig: Example of GNSS graphs} presents the two GNSS graphs associated with the simple cooperative networks introduced in Fig.~\ref{fig: Two receivers examples}.
	\begin{figure}
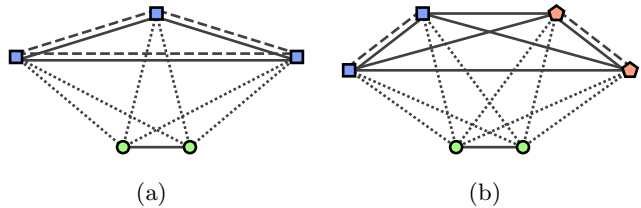

		\begin{subfigure}{.49\linewidth}
			\centering
			{\input{fig/bi_receivers_mono_constellation_grounded_graph.pgf}}
			\caption{}
			\label{sfig: Two receivers mono constellation}
		\end{subfigure}
		\begin{subfigure}{.49\linewidth}
			\centering
			{\input{fig/bi_receivers_bi_constellation_grounded_graph.pgf}}
			\caption{}
			\label{sfig: Two receivers bi constellation}
		\end{subfigure}
		\caption{GNSS graphs associated with the graphs of measurements of Fig.~\ref{fig: Two receivers examples}. The edge sets are represented by: dotted lines for $E_P$, solid lines for $E_D$, and dashed lines for $E_S$.}
		\label{fig: Example of GNSS graphs}
	\end{figure}
	We define a GNSS framework as $(\rG, \myvector{p})$, the combination of the GNSS graph and the pseudorange configuration of the agents. The rigidity matrix of a GNSS framework is:
		\begin{equation}\label{eq: Decomposition GNSS Rigidity matrix}
			\mymatrix{R}_{G}(\rG, \myvector{p}) = \begin{bmatrix}
				\mymatrix{R}_D(\Gamma, \myvector{p}) & \mymatrix{R}_S(\Gamma, \myvector{p}) \\
				\mymatrix{R}_D(G_D, \myvector{p}) & \mymatrix{0} \\
				\mymatrix{0} & \mymatrix{R}_S(G_S, \myvector{p})
			\end{bmatrix},
	\end{equation}
	where the edges of $G_S$ have been oriented.
	
	The theorems for pseudorange frameworks naturally adapt to GNSS frameworks by adapting the decomposition.
	\begin{definition}
		Consider $\tilde \rG = (\tilde \Gamma, G_D, G_S)$ an underlying undirected GNSS graph.
		Two simple graphs $G'_D = (V, E'_D)$ and $G'_S = (V, E'_S)$ are said to form a \emph{decomposition} of $\tilde \rG$ if there exists a decomposition $G_{PD} =(V, E_{PD})$, $G_{PS} = (V, E_{PS})$ of $\tilde \Gamma$ such that:
		\begin{enumerate}
			\item $E'_D = E_D \cup E_{PD}$.
			\item $E'_S = E_S \cup E_{PS}$.
		\end{enumerate}
	\end{definition}
	In other words, the pseudoranges of $\Gamma$ are divided between the graph of distance constraints and the graph of synchronizations.
	Adapting Theorem~\ref{the: Generic rank}, the rigidity of a GNSS framework is generic property of its underlying undirected graph and Corollary~\ref{cor: Decomposition} becomes:
	\begin{theorem}\label{the: GNSS Decomposition}
		Let $\tilde \rG$ be an undirected GNSS graph. $\tilde \rG$ is rigid in $\R^d$ if and only if there exists a decomposition $(G_D', G_S')$ of $\tilde \rG$ such that $G_D'$ is distance rigid in $\R^d$ and $G_S'$ is connected.
	\end{theorem}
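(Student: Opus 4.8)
The plan is to reduce Theorem~\ref{the: GNSS Decomposition} to a GNSS analog of the rank formula of Theorem~\ref{the: Generic rank}, and then to argue exactly as in the proof of Corollary~\ref{cor: Decomposition}. The starting point is the block structure \eqref{eq: Decomposition GNSS Rigidity matrix}: the left $nd$ columns of $\mymatrix{R}_G(\rG, \myvector{p})$ collect the spatial derivatives, whose entries are differences of coordinates (hence elements of $\K$), while the right $n$ columns collect the clock derivatives and, by \eqref{eq: Decomposition B}, are distance-weighted rows of incidence matrices. Consequently every minor of $\mymatrix{R}_G(\rG, \myvector{p})$ still lies in a field $\L$ of the type studied in Lemma~\ref{lem: Field extension} (now over the edge set $\tilde E \cup E_S$), so Lemma~\ref{lem: Extension generic nullity} again guarantees that a minor vanishing at a generic configuration vanishes identically. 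This is what turns the rank into a generic property of $\tilde \rG$.

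First I would establish the rank formula
\begin{equation*}
	\rank \mymatrix{R}_G(\rG, \myvector{p}) = \max_{(G'_D, G'_S)} \rank \mymatrix{R}_D(G'_D) + \rank \mymatrix{R}_S(G'_S),
\end{equation*}
the maximum ranging over the GNSS decompositions of $\tilde \rG$. The lower bound mirrors the first half of the proof of Theorem~\ref{the: Generic rank}: fixing a decomposition, one selects a maximal nonsingular square block in the spatial columns using the distance rows (those of $\Gamma$ and of $G_D$) and a maximal nonsingular square block in the clock columns using the synchronization rows (those of $\Gamma$ and of $G_S$), assembles the corresponding square submatrix $\mymatrix{N}$ of $\mymatrix{R}_G$, and expands $\det \mymatrix{N}$ by the Laplace theorem along the two column groups. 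As before, only the acyclic clock term survives, so $\det \mymatrix{N}$ has a nonzero coordinate in the natural basis of $\L$, is therefore a nonzero function, and does not vanish at the generic $\myvector{p}$. The upper bound is the converse bookkeeping: from a top-order nonzero minor one splits its rows into a spatial-contributing group and a clock-contributing group and reads off a decomposition attaining the two ranks; the only new feature is that the rows of $G_D$ are forced into the distance graph and the rows of $G_S$ into the synchronization graph, since their opposite blocks are zero.

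With the rank formula in hand, the characterization follows the proof of Corollary~\ref{cor: Decomposition} almost verbatim. For any decomposition, $\rank \mymatrix{R}_D(G'_D) \le S_D(n,d)$ with equality if and only if $G'_D$ is distance rigid in $\R^d$, and $\rank \mymatrix{R}_S(G'_S) = \rank \mymatrix{B}(G'_S) \le n-1$ with equality if and only if $G'_S$ is connected. The trivial motions of a GNSS framework are still the $d$ translations, the $\binom{d}{2}$ rotations and the single bias translation, so the maximal attainable rank is $S_P(n,d) = S_D(n,d) + (n-1)$, and $\tilde \rG$ is rigid if and only if this value is reached. By the rank formula this occurs if and only if some decomposition attains both equalities simultaneously, that is, if and only if there is a decomposition $(G'_D, G'_S)$ with $G'_D$ distance rigid and $G'_S$ connected.

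I expect the main obstacle to be the converse step of the rank formula, namely checking that the row split induced by an arbitrary maximal minor, together with the mandatory assignment of the edges of $G_D$ to $E'_D$ and of $G_S$ to $E'_S$, yields a genuine GNSS decomposition in the sense of the definition preceding the theorem. In particular one must verify that the pseudorange edges of $\tilde \Gamma$ used by the minor are partitioned consistently into $E_{PD}$ and $E_{PS}$, with the doubled arcs of $E_2$ correctly shared between the two graphs and not double-counted. Once this combinatorial accounting is settled, the remainder is a transcription of the pseudorange arguments already proved.
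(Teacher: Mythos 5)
Your proposal is correct and follows essentially the same route as the paper: the paper's own (much terser) proof likewise adapts the rank formula of Theorem~\ref{the: Generic rank} to the GNSS rigidity matrix \eqref{eq: Decomposition GNSS Rigidity matrix} — noting that the $G_D$ rows can only contribute to the spatial block and the $G_S$ rows only to the clock block — and then concludes exactly as in Corollary~\ref{cor: Decomposition} by matching $S_P(n,d)=S_D(n,d)+n-1$. The combinatorial bookkeeping you flag as the main obstacle (forcing $E_D$ into $E'_D$ and $E_S$ into $E'_S$ while splitting only the pseudorange edges) is precisely what the decomposition definition for GNSS graphs is set up to absorb, so no new difficulty arises there.
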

	\begin{proof}
		As for pseudorange frameworks, the maximal rank of a GNSS rigidity matrix is $S_P(n,d)$. A distance edge $uv \in G_D$ can only increase the rank of the distance part $\mymatrix{R}_D$. A synchronization edge $uv \in G_S$ can only increase the rank of the bias part $\mymatrix{R}_S$. Similar to the proof of Corollary~\ref{cor: Decomposition}, the rigidity matrix has maximal rank if and only if there exists a decomposition $(G_D', G_S')$ with $G_D'$ and $G_S'$ both having maximal rank.
	\end{proof}	
		
\subsection{Solvability of GNSS problems}
	
Theorem~\ref{the: GNSS Decomposition} provides a new interpretation for the solvability of GNSS problems. In a GNSS problem, the objective is to find the positions and the clock offsets of the receivers (with respect to some constellation time taken as a reference). A problem is said to be solvable if the measurements allow to isolate solutions, \ie{} if the solution set is discrete. There may be several solutions, but having a discrete set is sufficient: the navigation algorithm will track one of them, \eg{} the closest to the previous estimation. Under the natural assumptions that $(i)$ the agents are in a generic configuration, and $(ii)$ the number of satellites is $S \ge d$, the solvability of a GNSS problem is equivalent to the rigidity of the associated GNSS graph. If the GNSS framework is flexible, the measurements are insufficient to identify a specific configuration because of the admissible deformations. On the contrary, if the framework is rigid, there is locally only one configuration realizing the measurements, see \cite{aspnes2006theory} for a detailed analysis on the connections between solvability and rigidity (for distance rigidity).
	
	\begin{figure}
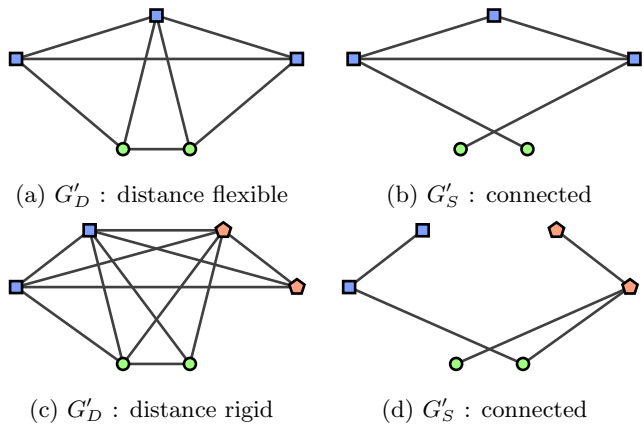

	\begin{subfigure}{.49\linewidth}
		\centering
		{\input{fig/bi_receivers_mono_constellation_rigid_graph.pgf}}
		\caption{$G_D'$ : distance flexible}
		\label{sfig: Two receivers mono constellation rigid graph}
	\end{subfigure}
	\begin{subfigure}{.49\linewidth}
		\centering
		{\input{fig/bi_receivers_mono_constellation_connected_graph.pgf}}
		\caption{$G_S'$ : connected}
		\label{sfig: Two receivers mono constellation connected graph}
	\end{subfigure}
	\begin{subfigure}{.49\linewidth}
		\centering
		{\input{fig/bi_receivers_bi_constellation_rigid_graph.pgf}}
		\caption{$G_D'$ : distance rigid}
		\label{sfig: Two receiver rigid graph}
	\end{subfigure}
	\begin{subfigure}{.49\linewidth}
		\centering
		{\input{fig/bi_receivers_bi_constellation_connected_graph.pgf}}
		\caption{$G_S'$ : connected}
		\label{sfig: Two receiver connected graph}
	\end{subfigure}
	\caption{Decompositions of the GNSS graphs of Fig.~\ref{fig: Example of GNSS graphs}.
		Figs.~\ref{sfig: Two receivers mono constellation rigid graph} and \ref{sfig: Two receivers mono constellation connected graph}: Non-rigid decomposition of the GNSS graph of \ref{sfig: Two receivers mono constellation} ($G_D'$ is not distance rigid in $\R^3$). Figs.~\ref{sfig: Two receiver rigid graph} and \ref{sfig: Two receiver connected graph}: rigid decomposition of the GNSS graph of Fig.~\ref{sfig: Two receivers bi constellation}.}
	\label{fig: Decompositions GNSS graphs}
\end{figure}

GNSS rigidity provides a new understanding on the minimal number of measures required to locate a GNSS receiver. If a receiver is measuring signals from satellites belonging to $C$ different GNSS constellations, $C+d$ satellites are required to locate it. In this case, $C$ pseudorange measurements are used to connect $G_S'$, \ie{} to synchronize the agents, while the other $d$ rigidify the position of the receiver in $G_D'$. Consider now the networks illustrated in Fig.~\ref{sfig: Two receivers mono constellation} and \ref{sfig: Two receivers bi constellation}. In both figures, each receiver receives signals from only $2+C$ satellites. Without cooperation, they cannot be located (in 3D) as it would require that they receive signals from at least $C+3$ satellites. If $C=1$, the cooperation does not allow the agents to be located. Fig.~\ref{fig: Decompositions GNSS graphs} proposes a decomposition of the GNSS graph with the graph $G_S'$ connected. The resulting graph $G_D'$ is not distance rigid in $\R^3$: the agents are connected to only two satellites and can \guillemets{swing} around them. In the bi-constellation scenario of Fig.~\ref{sfig: Two receivers bi constellation}, the problem of the localization of the agents becomes solvable: Fig.~\ref{fig: Decompositions GNSS graphs} presents a decomposition satisfying the condition of Theorem~\ref{the: GNSS Decomposition}.	
Theorem~\ref{the: GNSS Decomposition} also gives the minimal number of measurements required to locate a network.

\begin{lemma}\label{lem: Minimum number}
	The minimum number of pseudorange or distance measurements to locate a network of $R$ receivers using satellites from $C$ different constellations is $R(d+1) + C - 1$.
\end{lemma}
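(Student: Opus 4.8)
The plan is to translate \guillemets{solvable} into \guillemets{rigid} using the equivalence between solvability and rigidity stated above (valid under genericity and $S\ge d$), and then to bound the number of measured edges through the decomposition of Theorem~\ref{the: GNSS Decomposition}. Write $n = S + R$ for the total number of agents, and note that the free constraints are the inter-satellite distances (the satellite positions are known) and the intra-constellation synchronizations, whereas the measured edges are the pseudoranges $E_P$ (satellite-to-receiver) together with the inter-receiver distances (a subset of $E_D$). I would denote by $p_D$ and $p_S$ the pseudoranges assigned respectively to $G_D'$ and $G_S'$ in a decomposition $(G_D', G_S')$, and by $q$ the number of inter-receiver distances, so that the quantity to minimize is $p_D + p_S + q$.

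For the lower bound I would treat the two sides of the decomposition separately. On the synchronization side, the graph $(V, E_S)$ of given synchronizations has exactly $C + R$ connected components: the satellites of each constellation form one complete, hence connected, block, and the $R$ receivers are isolated. Since adding $k$ edges lowers the number of components by at most $k$, connectedness of $G_S' = E_S \cup E_{PS}$ forces $p_S \ge C + R - 1$. On the distance side, subadditivity of the rank over disjoint sets of rows gives
\begin{equation*}
S_D(n,d) = \rank \mymatrix{R}_D(G_D') \le S_D(S,d) + (q + p_D),
\end{equation*}
because the rows of $\mymatrix{R}_D(G_D')$ split into those of the inter-satellite subgraph (of rank at most $S_D(S,d)$) and those of the $q + p_D$ edges incident to a receiver. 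A direct computation shows $S_D(n,d) - S_D(S,d) = Rd$ for every $S \ge d$ (checking the branch $S \ge d+1$ and the boundary $S = d$ of the definition of $S_D$), hence $q + p_D \ge Rd$. Summing the two bounds yields $p_D + p_S + q \ge Rd + (C + R - 1) = R(d+1) + C - 1$.

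For achievability I would exhibit a graph meeting both bounds with equality. Assign to each receiver exactly $d$ pseudoranges toward satellites in general position and place them in $G_D'$; together with the rigid satellite cluster this pins each receiver by a Henneberg-type vertex addition, so $G_D'$ is distance rigid in $\R^d$, uses $Rd$ measured edges, and allows $q = 0$. Then add $C + R - 1$ further pseudoranges forming a spanning tree of the bipartite \guillemets{component graph} whose nodes are the $C$ constellations and the $R$ receivers, and place these in $G_S'$, which then becomes connected. The resulting decomposition satisfies the hypotheses of Theorem~\ref{the: GNSS Decomposition}, so the GNSS graph is rigid and the network is locatable with exactly $R(d+1) + C - 1$ measurements.

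The hard part will be the achievability verification rather than the counting. One must check that the $d$ pseudoranges attached to each receiver contribute independent rows: this is where genericity and the affine independence of the chosen satellites enter, and where the boundary case $S = d$, in which the satellites span only a hyperplane, has to be argued separately by observing that a generic receiver off that hyperplane still yields $d$ linearly independent difference vectors. One must also confirm that the spanning tree can be realized within the constellation-receiver incidence, i.e. that every constellation is reachable through some receiver. Finally, care is needed to keep the distance-role and synchronization-role pseudoranges disjoint, since in a decomposition each single arc is assigned to exactly one of $G_D'$ and $G_S'$; this disjointness is precisely what makes the two lower bounds additive and thus tight.
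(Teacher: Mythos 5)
Your proposal is correct and follows essentially the same route as the paper: localizability is reduced to rigidity of the GNSS graph, and the count $R(d+1)+C-1$ is obtained by splitting the measured edges between the distance-rigid part ($d$ per receiver) and the connected synchronization part ($C+R-1$ edges to merge the $C+R$ components). The paper's own proof is only a two-line sketch invoking induction, so your component-counting and rank-subadditivity lower bound together with the Henneberg-plus-spanning-tree construction is a faithful, more detailed version of the intended argument.
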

\begin{proof}
	To be localizable, the GNSS graph formed by the receivers and the satellites must be rigid. Therefore, it must have at least $S_P(R+S, d)$ constraints.
	The proof can be performed by induction. $d+1$ constraints are used to rigidify each receiver and $C$ to synchronize the constellations.
\end{proof}

Lemma~\ref{lem: Minimum number} extends the statement given in Section~\ref{sec: Introduction} (positioning $R = 1$ receiver requires at least $3+C$ measurements). Consider the network in Fig.~\ref{fig: Two receivers examples} as an  example for $R \geq 1$.  With $R=2$ receivers in $\R^3$, at least $7 + C$ measurements are needed to solve the problem of localizability. In Fig.~\ref{sfig: Two receivers examples mono constellation}, one constellation is seen by the receivers, but only $7 < 7+C$ measurements are available. Thus, due to Lemma~\ref{lem: Minimum number}, the receivers in Fig.~\ref{sfig: Two receivers examples mono constellation} cannot be located. In Fig.~\ref{sfig: Two receivers examples bi constellation} with $C=2$ constellations, as discussed in the previous paragraph, the two receivers are localizable. We verify that, as stated in Lemma~\ref{lem: Minimum number}, the number of measurements satisfies: $9 \ge 7 + C$. Estimating the positions and the biases is another problem, which will be discussed in the following section.

\subsection{About the estimation of the positions}

Estimating the positions of the agents in a network is a difficult question. It has been the subject of numerous studies when only distance measurements are considered, see \eg{} \cite{aspnes2006theory} or \cite{patwari2005locating}. This paragraph highlights how GNSS rigidity can be used to design estimation algorithms. The algorithm proposed here is for illustrative purposes only. It is based on a perfect setting (no noise or bias), and its robustness and performance should obviously be the subject of specific studies.

The rigidity of the GNSS graph guarantees that the rigidity matrix has a maximum rank, which is not full because of the trivial motions. The rotation and the translation ambiguities are eliminated thanks to the satellites that have known positions. The clock bias ambiguity is removed by taking the first constellation as a reference. If we decompose the configuration as $\myvector{p} = \begin{pmatrix} \myvector{p}_a^{\intercal} & \myvector{p}_u^{\intercal} \end{pmatrix}^{\intercal}$, where $\myvector{p}_a$ groups all the known parameters (positions of the satellites, and biases of the first constellation), and $\myvector{p}_u$ groups all the unknown parameters, then the columns of $\mymatrix{R}_G(\Gamma, p)$ associated with $\myvector{p}_u$ have full rank. From \eqref{eq: Pseudorange rigidity matrix}, the rigidity matrix is proportional to the Jacobian of the measurement function $F_G$. The rigidity of the GNSS graph implies that the Jacobian of $F_G$ with respect to $\myvector{p}_u$ has full column rank and can be inverted. This property is crucial for the design of estimation algorithms.
For example, it allows to apply Newton's method, commonly used in the PVT algorithm to estimate the position of a single receiver \cite[Section 2.5]{kaplan2017understanding}. Starting from an initial configuration  $\myvector{p}^0 = \begin{pmatrix} \myvector{p}_a^{\intercal} & \myvector{p}_u^{0 \intercal} \end{pmatrix}^{\intercal}$, the vectors $\myvector{p}_u$ and $\myvector{p}$ are updated until convergence as:
\begin{subequations}
	\begin{align}
		\myvector{p}^{k+1}_u &\gets \myvector{p}^k_u - \left[\frac{\partial \myvector{F}_G(\Gamma, \myvector{p}^k)}{\partial \myvector{p}^k_u}\right]^+ (F_G(\myvector{p}^k) - \myvector{y}_m),\\
		\myvector{p}^{k+1} &\gets \begin{pmatrix} \myvector{p}_a^{\intercal} & \myvector{p}_u^{k+1\intercal} \end{pmatrix}^{\intercal},
	\end{align}
\end{subequations}
where $\left[\cdot\right]^+$ denotes the pseudo-inverse and $\myvector{y}_m$ is the vector of measurements. If correctly initialized, the algorithm converges to the positions. Once again, this algorithm is just for illustrative purposes and should be further studied in future work. The important point is that rigidity ensures that the pseudo-inverse can be formed which is essential for the algorithm.

\section{Pseudorange rigidity for formation control}\label{sec: Pseudorange rigidity for formation control}

Beyond GNSS, pseudorange rigidity has applications in other fields such as flight formation control. To make a group of UAVs flies in formation, a common strategy is to constrain some of the distances between the UAVs, see \eg{} \cite{olfati2002graph}. To maintain the formation, the structure formed by the agents must be rigid. Consequently, it requires at least $S_D(n,d)$ distance measurements. In practice, a distance measurement between UAVs is often carried out using the time-of-flight of a signal between the agents \cite{patwari2005locating}. As in GNSS, since the agents are not synchronized, the time-of-flight does not provided directly the distance between the agents.

If the clocks are modeled simply with a bias, as in \eqref{eq: Model of clock 1}, the time-of-flight gives the pseudorange between the agents. To suppress the bias, one can apply a symmetrical two-way ranging procedure by making two symmetrical pseudorange measurements and average them: $\delta_{u,v} = (\rho_{u,v} + \rho_{v,u})/2$. Therefore, maintaining the formation with this procedure requires constraining $2S_D(n,d)$ pseudoranges. If the agents were considered as a pseudorange framework instead of as a distance framework, it would require only $S_P(n,d)$ pseudorange constraints to maintain the formation. For large networks, as $2S_D(n,d) \sim_n 2nd$ and $S_P(n,d) \sim_n n(d+1)$, this second procedure reduces the number of measurements by up to $25\%$ in 2D and $33\%$ in 3D.

From an implementation point of view, pseudoranges have another interest. To control a formation the agents may be commanded to maintain only some of the constraints. Formation persistence \cite{hendrickx2007directed} studies the (distance) rigidity of graphs assuming that the constraints are maintained by only one agent, called the \emph{follower}. If the graph has some properties, the whole formation is preserved. This technique greatly simplifies the command. With the symmetrical two-way ranging procedure, an agent having several followers has to interact with every one of them to compute the distances. When the number of followers increases, the update rate necessarily decreases, which may induce a loss of precision. With the pseudorange approach in contrast, an agent having several followers may not interact with them: he could simply broadcast its position and bias, then, each follower could compute the pseudorange without any feedback. This approach allows significant scale up in the system as the number of followers would not be limited by the channel capacity. Persistence can be adapted to pseudoranges. Consider for example the rigid pseudorange graph in Figure~\ref{fig: Directed formation}. It requires feedback only between the first three agents. The rest of the agents can maintain the formation by maintaining only the pseudorange constraints pointing to them.
\begin{figure}[t]
	\centering
	{\input{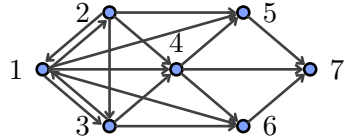}}
	\caption{Example of a rigid formation in $\R^2$.}
	\label{fig: Directed formation}
\end{figure}

If the clocks are modeled with a bias and a skew as in \eqref{eq: Model of clock 2}, the times-of-flight induce the constraints of JPC rigidity. In that situation, the pseudorange becomes:
	\begin{align*}
		\rho_{uv} &\triangleq c\left(t_{r(uv)}^v - t_{e(uv)}^u\right),\\
		&= w_v c t_{r(uv)} + \beta_v - w_u c  t_{e(uv)} - \beta_u, \\
		&= w_v \norm{x_u - x_v} +(w_v - w_u) c t_{e(uv)} + \beta_v - \beta_u,
	\end{align*}
	where $\beta_u \triangleq c\tau_u$.
	In \cite{wen2022clock}, it is assumed that $(i)$ the communication is bi-directional \cite[Assumption 1]{wen2022clock}, and $(ii)$ the agents send only one signal at a time $t_{e(u)}$ \cite[Assumption 2b]{wen2022clock}, \ie{} for all $v$, $t_{e(uv)} = t_{e(u)}$. In JPC rigidity, an edge constrains both the pseudoranges $\rho_{uv}$ and $\rho_{vu}$. We can imagine asymmetrical network in which, \eg{} some agents are only listening. The resulting framework would have a directed graph as a pseudorange graph.
A natural question is: Is asymmetrical JPC rigidity a generic property of the underlying undirected graph? The answer is no. For example, consider a fully connected graph of $4$ agents, such a graph is clearly symmetrical. Furthermore, according to \cite{wen2022clock}, it is JPC rigid in $\R^2$. Add a $5$th agent, we claim that if we add the arcs $(i, 5)$ for $i = 1, \dots, 4$, the graph is rigid, while if we add the arcs $(5, i)$ for $i = 1, \dots, 4$, it is not. Those two graphs are illustrated in Figure~\ref{fig: Example of JPC graphs}.
\begin{figure}
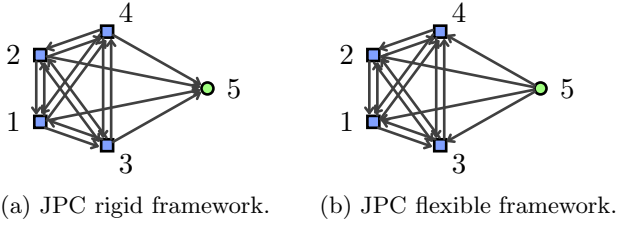

	\begin{subfigure}{.49\linewidth}
		\centering
		{\input{fig/jpc_rigid_graph.pgf}}
		\caption{JPC rigid framework.}
	\end{subfigure}
	\begin{subfigure}{.49\linewidth}
		\centering
		{\input{fig/jpc_flexible_graph.pgf}}
		\caption{JPC flexible framework.}
	\end{subfigure}
	\caption{Examples of asymmetrical JPC frameworks.}
	\label{fig: Example of JPC graphs}
\end{figure}
Indeed, the constraint induced by the arc $(i,5)$ is:
	\begin{subequations}
		\begin{equation}\label{eq: JPC Constraint i5}
			w_5 \left(\norm{x_i - x_5}+ t_{e(i)}\right) +\beta_i = \rho_{i5} + w_i  t_{e(i)} + \beta_i,
		\end{equation}
		while the constraint induced by the arc $(5,i)$ is:
		\begin{equation}\label{eq: JPC Constraint 5i}
			w_5 t_{e(5)} + \beta_5 = -\rho_{5i} + w_i(\norm{x_i - x_5} +  t_{e(5)}) + \beta_i.
		\end{equation}
	\end{subequations}
Constraint \eqref{eq: JPC Constraint i5} is similar to a GNSS constraint: with the first four agents set, the RHS is a constant. Generically, with $d+2$ equations the parameters of the $5$th agent are set, thus this graph is rigid. On the other hand, from \eqref{eq: JPC Constraint 5i}, neither $w_5$ nor $\beta_5$ can be identified but only the sum $w_5ct_{e(5)} + \beta_5$, thus this graph is flexible.
Asymmetrical JPC rigidity is therefore not a generic property of the underlying undirected graph. It is a more complex rigidity, since it uses a more complex clock model, and should be studied in future work.

\section{Conclusion}\label{sec: Concluding remarks}

This paper has introduced a new rigidity based on pseudoranges. It specificity is that the graph of constraints is directed. A complete characterization of generic rigidity of pseudorange graphs is provided based on distance rigidity. A pseudorange rigid graph is the combination of a distance rigid graph and a connected graph. Pseudorange rigidity has been adapted to answer the solvability of GNSS cooperative positioning problems. It was also highlighted how rigidity can be used to design algorithms to solve the positioning problem. Finally, the interest of pseudorange rigidity has been presented in formation control with a comparison with the recently introduced Joint Position-Clock rigidity.

Three possible future research directions have been suggested. The first is the study of global pseudorange rigidity. Once adapted to GNSS, it can provide the condition for the uniqueness of the solution. The second is the derivation of specific algorithms to estimate the positions of the agents. The third is the adaptation of pseudorange rigidity with more complex clock models.

\appendix

\section{Proof of Lemma~\ref{lem: Field extension}}\label{ap: Proof of algebraic lemmas}

Lemma~\ref{lem: Field extension} is the application of the following lemma.
\begin{lemma}\label{lem: Field extension multidimensional rigidity}
	Let $\K = \Q(X_1, \dots, X_N)$ be the field of fractions in $N$ variables with coefficients in $\Q$ and $(R_1, \dots, R_m)$ be a family of functions such that:
	\begin{enumerate}[label=\textbf{H.\arabic*}, ref=Hypothesis H.\arabic*]
		\item\label{it: H1} $\forall i \in \{1, \dots, m\}$, $R_i^2 \in \K$;
		\item\label{it: H2} $\forall i \in \{1, \dots, m\}$, $R_i^2 \notin \K^{(2)}$, with $\K^{(2)}=\{P^2 \mid P\in \K\}$;
		\item\label{it: H3} $\forall I \in \rP(\{1, \dots, m\})\setminus\{\emptyset\}$, $R_I = \prod_{i \in I} R_i \notin \K$.
	\end{enumerate}
	Then $\L = \K\left[R_1, \dots, R_m\right]$ is a field and $\L/\K$ is a field extension of order $2^m$.
\end{lemma}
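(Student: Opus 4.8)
The plan is to recognize Lemma~\ref{lem: Field extension multidimensional rigidity} as a statement about a \emph{multiquadratic} (Kummer-type) extension and to prove it by induction on $m$, building the tower $\K \subseteq \K[R_1] \subseteq \dots \subseteq \K[R_1,\dots,R_m] = \L$ of successive quadratic steps. The conceptual content is that \ref{it: H1} makes each $R_i$ quadratic over $\K$ (so the $2^m$ squarefree products $R_I=\prod_{i\in I}R_i$ span $\L$, giving $\dim_\K\L\le 2^m$), while \ref{it: H2} and \ref{it: H3} together say exactly that the classes of the $R_i^2$ are $\mathbb{F}_2$-linearly independent in $\K^\times/(\K^\times)^2$: indeed $\prod_{i\in I}R_i^2 = R_I^2$, and $R_I^2\in\K^{(2)}$ if and only if $R_I\in\K$, so ``$R_I\notin\K$ for every nonempty $I$'' is precisely the independence needed to force the degree up to the maximal $2^m$. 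Since the $R_i$ are honest functions living in a common field, $\L$ is a subring of a field, but I will not need this separately: fieldness will come for free from the tower.

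I would carry along the inductive statement $(P_m)$: if $R_1,\dots,R_m$ satisfy \ref{it: H1}--\ref{it: H3}, then $\set{R_I \mid I\subseteq\{1,\dots,m\}}$ is a $\K$-basis of the field $\L$, so $[\L:\K]=2^m$; moreover for each $i$ there is a $\K$-algebra automorphism $\sigma_i$ of $\L$ with $\sigma_i(R_j)=(-1)^{\delta_{ij}}R_j$, and the subfield of $\L$ fixed by all the $\sigma_i$ is exactly $\K$. The point is that the last two assertions follow automatically once the basis is known: using the multiplication law $R_JR_{J'}=\bigl(\prod_{i\in J\cap J'}a_i\bigr)R_{J\triangle J'}$ (where $a_i=R_i^2$ and $\triangle$ is symmetric difference), one defines $\sigma_k(R_J)=(-1)^{[k\in J]}R_J$ on the basis and checks it respects products because $[k\in J]+[k\in J']\equiv[k\in J\triangle J']\pmod 2$; and an element $\sum_J c_JR_J$ is fixed by every $\sigma_k$ iff $c_J=0$ for all nonempty $J$. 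The base case $m=0$ (or $m=1$, where \ref{it: H2} says $X^2-a_1$ has no root in $\K$) is immediate. For the inductive step I set $\L'=\K[R_1,\dots,R_{m-1}]$, apply $(P_{m-1})$ to obtain the basis, the degree $2^{m-1}$, the automorphisms $\sigma_1,\dots,\sigma_{m-1}$ of $\L'$ and the fixed field $\K$, and then reduce everything to the single claim $R_m\notin\L'$: granting it, $X^2-a_m$ is an irreducible quadratic over the field $\L'$ (its roots are $\pm R_m\notin\L'$), so $\L=\L'[R_m]\cong\L'[X]/(X^2-a_m)$ is a field with $[\L:\L']=2$, whence $[\L:\K]=2^m$ and $\set{R_J,\,R_JR_m}=\set{R_I}$ is a basis.

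The crux — and the main obstacle — is establishing $R_m\notin\L'$, which is exactly where \ref{it: H3} enters. Assume for contradiction $R_m\in\L'$. Since $R_m^2=a_m\in\K$ is fixed by each $\sigma_k$, we get $\sigma_k(R_m)^2=R_m^2$, hence the dichotomy $\sigma_k(R_m)=\epsilon_kR_m$ with $\epsilon_k\in\set{+1,-1}$. If every $\epsilon_k=+1$ then $R_m$ lies in the common fixed field, i.e.\ $R_m\in\K$, contradicting \ref{it: H2}. Otherwise let $S=\set{k\le m-1 \mid \epsilon_k=-1}\neq\emptyset$ and consider $P=R_m\,R_S$ with $R_S=\prod_{k\in S}R_k\in\L'$. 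The key sign computation is that for every $l$, $\sigma_l(R_S)=(-1)^{[l\in S]}R_S$ and $\sigma_l(R_m)=(-1)^{[l\in S]}R_m$, so the two signs cancel and $\sigma_l(P)=P$; thus $P$ is fixed by all the $\sigma_l$ and therefore $P\in\K$. But $m\notin S$, so $P=R_m R_S=R_{S\cup\set{m}}$ with $S\cup\set{m}$ a nonempty subset of $\set{1,\dots,m}$, and $R_{S\cup\set{m}}\in\K$ contradicts \ref{it: H3}. This proves the claim and closes the induction; Lemma~\ref{lem: Field extension} then follows by taking $N=nd$ and $R_{uv}=D_{u,v}$, since the distance functions satisfy \ref{it: H1} (as $D_{u,v}^2$ is a polynomial) and \ref{it: H2}--\ref{it: H3} (no product of distinct distance functions is rational). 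I expect the only delicate bookkeeping to be the verification that the $\sigma_k$ are genuine automorphisms and the sign cancellation in $P$; both are routine once the multiplication table is in hand.
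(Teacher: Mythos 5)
Your proof is correct, and it reaches the same conclusion by a genuinely different mechanism at the crucial step. Both you and the paper induct on $m$ along the tower $\K\subseteq\K[R_1]\subseteq\dots\subseteq\L$, and in both cases everything reduces to showing $R_m\notin\L'=\K[R_1,\dots,R_{m-1}]$ (equivalently $R_m^2\notin\L'^{(2)}$). The paper handles this by writing a hypothetical square root as $A+BR_{m-1}$ with $A,B\in\L'' = \K[R_1,\dots,R_{m-2}]$, expanding the square, and splitting into the cases $AB\neq 0$, $B=0$, $A=0$; the last two cases are dispatched by re-applying the induction hypothesis to the \emph{modified} families $(R_1,\dots,R_{m-2},R_m)$ and $(R_1,\dots,R_{m-2},R_{m-1}R_m)$, after checking that these still satisfy \textbf{H.1}--\textbf{H.3}. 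You instead strengthen the induction hypothesis to carry the sign-flip automorphisms $\sigma_k$ and the identification of their common fixed field with $\K$, and then derive the contradiction in one stroke: if $R_m\in\L'$ then $\sigma_k(R_m)=\epsilon_kR_m$, and either all $\epsilon_k=+1$ (forcing $R_m\in\K$, against \textbf{H.2}/\textbf{H.3}) or the correction $R_mR_S$ is fixed by every $\sigma_k$ and hence lies in $\K$, against \textbf{H.3}. This is the standard Kummer-theoretic treatment of multiquadratic extensions; it costs a heavier inductive statement (basis, multiplication table, automorphisms, fixed field) but buys a cleaner crux, avoids the case analysis and the verification that the modified families inherit the hypotheses, and yields the Galois group $(\Z/2\Z)^{m}$ as a by-product. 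One caveat you share with the paper: both arguments repeatedly pass from $P^2=Q^2$ to $P=\pm Q$, which tacitly uses that all the $R_i$ live in a common integral domain (an overfield of $\K$); this is implicit in the statement's phrasing "family of functions" and is not a gap specific to your write-up. (Minor quibble: when all $\epsilon_k=+1$ you conclude $R_m\in\K$ and cite \textbf{H.2}; the direct contradiction is with \textbf{H.3} applied to $I=\{m\}$, though squaring does also contradict \textbf{H.2}.)
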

\begin{proof}
	The proof is realized by induction over $m$. The property to prove is $\mathbf{P}(k)$: \guillemets{For any $R_1, \dots, R_k$ satisfying the three hypotheses, $\L = \K\left[R_1, \dots, R_k\right]$ is a field and $\L/\K$ is a field extension of order $2^k$.}
	
	\emph{Initialization.} For $k = 1$, let $R$ satisfy the three hypotheses. To prove that $\K[R]$ is a field, proving that every non-null element has an inverse is sufficient. Let $P \in \K[R]$, $P \ne 0$. Since $R^2 \in \K$, there exists $(A,B) \in \K^2$ with $(A,B) \ne (0,0)$ such that $P = A + BR$. If $B = 0$, $P = A \in \K$ therefore $P$ is invertible. If $B \neq 0$, using \ref{it: H2}, $R^2 \ne A^2/B^2$, therefore $A^2 - B^2 R^2 \neq 0$. Then, $(A - BR)/(A^2 - B^2R^2) \in \K[R]$ is the inverse of $P$. The extension is of order $2$ by \ref{it: H1} and \ref{it: H3}. Thus, $\mathbf{P}(1)$ is true.
	
	\emph{Induction step.} Assume $\mathbf{P}(k)$ for $k \ge 1$ and prove $\mathbf{P}(k+1)$. Let $R_1, \dots, R_{k+1}$ be $k+1$ functions satisfying the three hypotheses. We denote $\L_k = \K[R_1, \dots, R_k]$. First, let us prove that $\L_{k+1}$ is a field. Proving that $R_{k+1}^2 \notin \L_{k}^{(2)}$ is sufficient since then, with the same arguments as for the initialization every non-null element of $\L_{k+1}$ would have an inverse.
	Let us assume by contradiction that $R_{k+1}^2 \in \L_{k}^{(2)}$. By induction hypothesis, $\L_{k} = \L_{k-1}[R_k]$. Therefore, there exist $A, B \in \L_{k-1}$ such that:
	\begin{equation}
		R_{k+1}^2 = \left(A + B R_{k}\right)^2 = A^2 + B^2 R_{k}^2 + 2AB R_k
	\end{equation}
	If $AB \neq 0$, then $R_k \in \L_{k-1}$ which contradicts the induction hypothesis. Then, necessarily $A$ or $B$ is null. If $B = 0$, then $R_{k+1}^2 = A^2 \in \L_{k-1}^{(2)}$ This also contradicts the induction hypothesis when considering the $k$ functions $R_1,\dots, R_{k-1}, R_{k+1}$. Therefore $A = 0$. If $A = 0$ then, $R_{k+1}^2 = B^2 R_k^2$ and $(R_{k+1}R_k)^2 = (B R_k^2)^2 \in \L_{k-1}^{(2)}$. Similarly, this also contradicts the induction hypothesis when considering the $k$ functions $R_1, \dots, R_{k-1}, R_k R_{k+1}$ (which satisfies the three hypotheses). Theses contradictions give that $R_{k+1}^2 \notin \L_{k}^{(2)}$ and thus, $\L_{k+1}$ is a field.
	To prove the order, let us use the induction hypothesis:
	\begin{equation}
		[\L_{k+1}: \K] = [\L_{k+1}: \L_{k}][\L_{k}: \K] = 2^k [\L_{k+1}: \L_k]
	\end{equation}
	Since $R_{k+1} \notin \L_k$ and $R_{k+1}^2 \in \L_k$,  $[\L_{k+1}:\L_k] = 2$ and $[\L_{k+1}: \K] = 2^{k+1}$. $\mathbf{P}(k+1)$ is true.
\end{proof}

\begin{proof}[Proof of Lemma~\ref{lem: Field extension}]
	Let $E \subseteq \{uw \mid 1 \le u < v \le N\}$ be a set of edges and $m = \abs{E}$.
	The set of $m$ distance functions $D_{u,v}$ do satisfy the three conditions of Lemma~\ref{lem: Field extension multidimensional rigidity} when $d \ge 2$.
	
	Note however that when $d = 1$, the distance functions do not satisfy \ref{it: H2} of Lemma~\ref{lem: Field extension multidimensional rigidity}.
\end{proof}

\bibliographystyle{plain}
\bibliography{references.bib}

\end{document}